\pdfoutput=1
\documentclass[10pt,pra,aps,superscriptaddress,twocolumn]{revtex4-1}
\usepackage{amsmath,bbm}
\usepackage{amsthm}
\usepackage{amsmath}
\usepackage{latexsym}
\usepackage{amssymb}
\usepackage{float}
\usepackage{comment}
\usepackage{braket}
\usepackage{graphicx}           
\usepackage{color}
\usepackage{mathpazo}
\usepackage{comment}
\usepackage{enumitem}
\usepackage{multirow}
\usepackage{setspace}
\usepackage{xcolor}
\usepackage[colorlinks=true,linkcolor=blue,citecolor=magenta,urlcolor=blue]{hyperref}
\usepackage{amsmath,amssymb,amsthm,bm,amsfonts,mathrsfs,bbm} 
\usepackage{physics}

\newcommand{\be}{\begin{equation}}
\newcommand{\ee}{\end{equation}}
\newcommand{\bea}{\begin{eqnarray}}
\newcommand{\eea}{\end{eqnarray}}

\def\squareforqed{\hbox{\rlap{$\sqcap$}$\sqcup$}}
\def\qed{\ifmmode\squareforqed\else{\unskip\nobreak\hfil
\penalty50\hskip1em\null\nobreak\hfil\squareforqed
\parfillskip=0pt\finalhyphendemerits=0\endgraf}\fi}
\def\endenv{\ifmmode\;\else{\unskip\nobreak\hfil
\penalty50\hskip1em\null\nobreak\hfil\;
\parfillskip=0pt\finalhyphendemerits=0\endgraf}\fi}
\newcommand{\I}{\mathbbm{1}}

\newcommand{\od}{\overline{d}}

\newcommand{\la}{\langle}
\newcommand{\ra}{\rangle}

\newcommand{\re}{\color{blue}}  
\newcommand{\blk}{\color{black}}

\makeatletter
\newtheorem*{rep@theorem}{\rep@title}
\newcommand{\newreptheorem}[2]{%
\newenvironment{rep#1}[1]{%
 \def\rep@title{#2 \ref{##1}}%
 \begin{rep@theorem}}%
 {\end{rep@theorem}}}
\makeatother

\newtheorem{theorem}{Theorem}
\newreptheorem{theorem}{Theorem}

\newtheorem{definition}{Definition}

\newtheorem{result}{Observation}

\newtheorem{coro}{Corollary}

\begin{document}


\title{Classifying Measurement Incompatibility under Classical Pre- and Post-Processing Operations}

\author{Arun Kumar Das}
\email{akd.qip@gmail.com}
\affiliation{S. N. Bose National Centre for Basic Sciences, Block JD, Sector III, Salt Lake, Kolkata 700106, India}

\author{Saheli Mukherjee}
\email{mukherjeesaheli95@gmail.com}
\affiliation{S. N. Bose National Centre for Basic Sciences, Block JD, Sector III, Salt Lake, Kolkata 700106, India}

\author{Debashis Saha}
\email{dsaha@iitbbs.ac.in}
\affiliation{School of Physics, Indian Institute of Science Education and Research Thiruvananthapuram, Kerala 695551, India}
\affiliation{Department of Physics, School of Basic Sciences, Indian Institute of Technology Bhubaneswar, Odisha 752050, India}

\author{Debarshi Das}
\email{dasdebarshi90@gmail.com}
\affiliation{Department of Physics, Shiv Nadar Institution of Eminence, Gautam Buddha Nagar, Uttar Pradesh 201314, India}
\affiliation{Department of Physics and Astronomy, University College London, Gower Street, WC1E 6BT London, England, United Kingdom}

\author{A. S. Majumdar}
\email{archan@bose.res.in}
\affiliation{S. N. Bose National Centre for Basic Sciences, Block JD, Sector III, Salt Lake, Kolkata 700106, India}


\begin{abstract}
Measurement incompatibility has proved to be an important resource for quantum information processing. In this work, we present an operational approach that leverages classical operations on the inputs (pre-processing) and outputs (post-processing) of measurement devices to explore different layers of incompatibility among the measurements performed by the device. We study classifications of measurement incompatibility with respect to these two types of classical operations, {\it viz.}, post-processing or coarse-graining of measurement outcomes and pre-processing or convex-mixing of different measurements.  We derive analytical criteria for determining when a set of projective measurements is fully incompatible with respect to coarse-graining or convex-mixing. Robustness against white noise for different layers of incompatibility for mutually unbiased bases is investigated. Furthermore, we study operational witnesses for incompatibility subject to these classical operations, using the input-output statistics of Bell-type experiments as well as experiments in the prepare-and-measure scenario. 
\end{abstract}

\maketitle


\section{introduction}

Measurement incompatibility is a concept related to observables that cannot be measured jointly with arbitrary accuracy \cite{guhne2023colloquium}. It is purely a quantum effect, of which the most well-known example concerns the position and momentum of a quantum particle.  Being a fundamental concept of quantum theory, it takes a pivotal role in explaining several quantum phenomena,  such as, Bell-nonlocality \cite{Fine1982,PRL103230402}, Einstein-Podolsky-Rosen steering \cite{PhysRevLett.113.160403,PhysRevLett.113.160402,tp1,PhysRevLett.115.230402,tp2}, measurement uncertainty
relations \cite{tp_prl, mur, MDM17, saha2020}, quantum contextuality \cite{LIANG20111,Xu2019,sg_prl}, quantum violation of macrorealism \cite{pla280.2265,PhysRevA.100.042117}, and temporal
and channel steering \cite{Karthik:15,PhysRevA.91.062124,PhysRevA.97.032301}.  
Apart from the foundational significance of incompatible measurements, they have been proven to be an important resource for various information processing tasks \cite{PhysRevLett.122.130403,Buscemi_prl,Heinosaari_prl}.   Recently, It is also known that measurement incompatibility is necessary
for quantum advantage in any one-way communication task \cite{SahaIncom}. 

 The significance of measurement incompatibility in various operational tasks calls for its in-depth characterization. Towards this direction, a classification of measurement incompatibility with respect to projection onto subspaces has been recently performed \cite{uola21}, also different hierarchy of incompatible measurements has been given based on the minimum number of copies of quantum states required \cite{mathCarmeli}, and the minimum number of measurements required to simulate them \cite{measuremestSimulability_guerini,prl17_oszmaniec, filippovPra18,egelhaaf_pmBell}.

In the present work, our objective is to classify measurement incompatibility in an operational approach that does not involve the details of a theory. To put it differently, we address how various degrees of incompatibility can be assessed solely by executing basic classical operations on the inputs or outputs of these measurements. From an operational perspective, when the concerned measurement devices are black boxes with no control over the internal workings, one can still realize different measurements by performing suitable classical operations to manipulate the statistics of the measurements.
Here, we employ two such operations: coarse-graining of measurement outcomes (classical post-processing performed on the outcomes of a measurement) and convex-mixing of measurement settings (classical pre-processing performed on the inputs). One can also consider classical operations performed on both inputs and outputs.   Similar types of classical operations were also introduced in \cite{haapasaloQip12}, nevertheless, the study of incompatibility under these classical operations was lacking. The motivation of this present work is to fill this crucial gap in the literature.

Coarse-graining of measurement outcomes arises naturally in several instances, for example, in measurements on continuous variable systems \cite{cg_uncertainty}. Though the eigen spectra of the observables are infinite-dimensional and continuous, real-world experimental devices are limited by finite precision, leading to the measurement outcomes taking a finite number of discrete values. This inaccuracy in the recording of measurement outcomes is manifested in the coarse-graining of measurement outcomes, which is inevitable in practice. 
On the other hand, device imperfection may also lead to the measurement device performing a set of measurements probabilistically, instead of always performing the desired particular measurement. In such a case, a convex-mixing of the given set of measurements arises effectively \cite{Ariano_2005}. 

By leveraging these classical operations on the measurement device, we can learn the finer details about the measurements i.e. the different layers of incompatibility, which can give a measure of the degree of incompatibility of a set of measurements. This can set a benchmark for legitimately choosing incompatible measurements for an information processing task. Our study is motivated to address how one may compare the degree of incompatibility between two different sets of measurements subjected to the aforementioned classical operations. For instance, if the first set remains incompatible for every possible non-trivial coarse-graining of the measurement outcomes, but the second set becomes compatible for a certain coarse-graining, it follows that the first set of measurements 
exhibits stronger incompatibility compared to the second one. A similar argument holds for the case of convex-mixing of measurements also.



In this work, we establish analytical criteria for determining when a pair of projective measurements are fully incompatible, i.e., remain incompatible under all possible coarse-grainings of measurement outcomes. We give the necessary and sufficient condition for the full-incompatibility with respect to coarse-graining for any two rank-one projective measurements in dimension $d$ and we provide many supportive examples to visualise this. We notice that any two mutually unbiased bases in prime dimension are fully incompatible with respect to coarse-graining. We further analyse the full incompatibility of a set of three qubit measurements under all possible convex-mixing as well. Within the context of our present study, noise is reflected in degrading the incompatibility properties of various measurement sets \cite{mate}. We compute the critical noise threshold below which the mutually unbiased bases measurements remain incompatible under the above-mentioned classical operations.

 As incompatible measurements are useful for various information processing tasks, any device claiming to produce incompatible measurements must be certified before using it in an experiment. Verification of incompatibility of the measurements is possible
from the input-output measurement statistics obtained from the device 
without knowing its internal functioning. This can be done in two ways one is a device-independent way utilizing Bell-type experiments \cite{wolf_prl,Bene_2018,PhysRevA.97.012129} and another is a semi-device-independent way inspired by the standard prepare-and-measure scenario \cite{PRXQuantum_carlos,SahaIncom,Carmeli2020}.  
In the present work, we investigate the issue of certification of different layers of measurement incompatibility 
under the introduced classical operations both from the device-independent and the semi-device-independent perspective. As both Bell-type experiments and prepare-and-measure experiments \cite{cglmp_expt_pra24,armin_prl15} are viable with the current technology, by performing our introduced classical operations on those experiments we can operationally certify the different layers of incompatibility thus allowing us to know the subtleties of incompatibility about the measurement device which is a new and interesting offshoot of our work.  

Our paper is structured as follows. In Sec.(\ref{sec2}), the significance of these classical operations in an operational paradigm is discussed along with the physical ground for these classical operations. In Sec. (\ref{sec4}), we define a hierarchy in the incompatibility of measurements under these classical operations. In Sec. (\ref{sec5}), we study the noise robustness
for various levels of incompatibility of measurements subjected to the above classical operations. In Sec. (\ref{sec6}), we define operational
witnesses of incompatibility of measurements under these classical operations and furnish examples to study their
performance in device-independent and semi-device-independent frameworks. Concluding remarks are presented in Sec.(\ref{sec7}).

\section{Classifying incompatibility under classical operations of the inputs and the outputs} \label{sec2}

Consider a measurement device where, when a physical system is probed, we can choose measurement settings via a classical parameter, denoted by $x$. This parameter serves as the input to the device, and the device produces the measurement outcome $z$, corresponding to that input. Suppose there are $n$ possible inputs, i.e., $x \in [n]$, and for each input, there are $d$ possible outcomes, i.e., $z \in [d]$, where $[k]$ represents the set $\{0, \cdots, k-1\}$ for any natural number $k$. For simplicity, we assume that the number of outcomes for each measurement is the same. Without loss of generality, this assumption holds, as any measurement with fewer than $d$ outcomes can be treated as having $d$ outcomes by considering some outcomes as never occurring. By probing different system preparations, we can gather input-output statistics from the device.

In quantum mechanics, measurements are described by Positive Operator Valued Measures (POVMs), which consist of positive semi-definite operators that sum up to the identity operator. Let us represent the measurements realized in the device by the set of operators $\{M_{z|x}\}_{z,x}$, where $x$ indexes different measurements, and $z$ denotes the corresponding outcomes. A set of measurements is said to be compatible if there exists a parent POVM, $G_\lambda$, and classical post-processing $\{p(z|x,\lambda)\}$ for each $x$ such that
\be \label{def_comp}
\forall z,x, \quad M_{z|x} = \sum_\lambda p(z|x,\lambda) G_\lambda ,
\ee 
where $0\le p(z|x,\lambda)\le 1,$ and $\sum_{z}p(z|x,\lambda)=1,$ for all $x,\lambda$ \cite{guhne2023colloquium}. 
As a special case, if the operators are projectors, then the two measurements are jointly measurable when their corresponding operators commute.

In an operational paradigm, we do not have direct control over the internal workings of the measurement device. However, we can manipulate the classical inputs and outputs to realize different measurements. 

\begin{figure}[h!]
    \centering
    \includegraphics[width=1\linewidth]{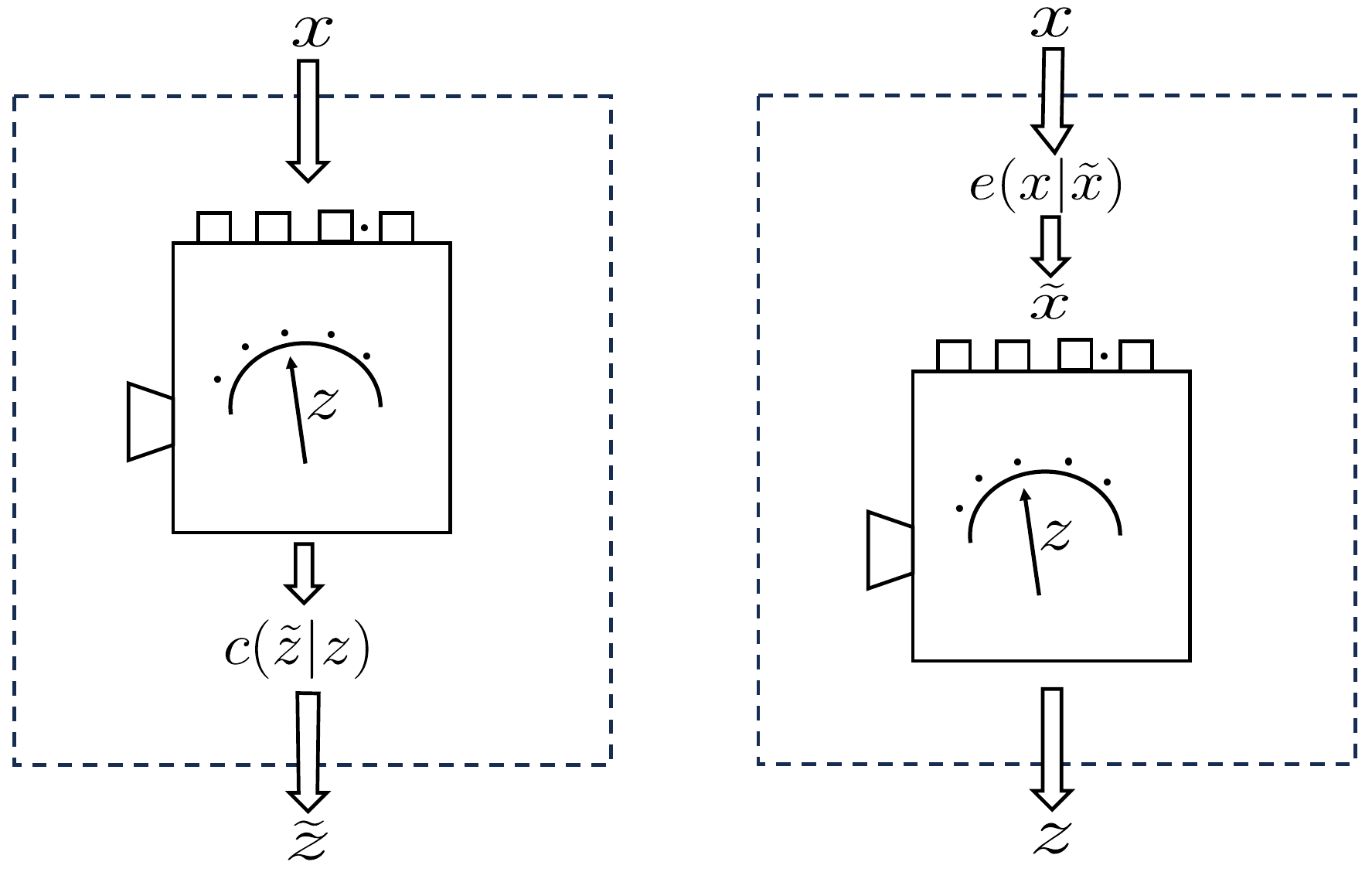}
    \caption{Classical operations on the outputs (left) and classical operations on the inputs (right) to realize a new set of measurements are depicted.}
    \label{fig}
\end{figure}

\subsection{Classical operations on outputs}

In general, one can modify the outcomes of a measurement by applying a post-processing operation or stochastic map (see FIG. \ref{fig}). For a given input $x$, this operation is mathematically described by a probability distribution $\mathcal{C}_x := \{c_x(\tilde{z}_x|z)\}$, $c_x(\tilde{z}_x|z) \geqslant 0$ and $\sum_{\tilde{z}_x} c_x(\tilde{z}_x|z) =1$. Here $\tilde{z}_x \in [\tilde{d}_x]$ denotes the outcome of the new measurement after the operation, and $\mathcal{C}_x$ is the stochastic map with $c_x(\tilde{z}_x|z)$ is the probability of producing outcome $\tilde{z}_x$ when outcome $z$ is obtained for the measurement $x$. In general, one can apply different operations for different $x$. By applying different operations for each $x$, one can generate a new set of measurements described by the operators $\{\tilde{M}_{\tilde{z}_x|x}\}_{\tilde{z}_x,x}$ as follows: 
\be \label{co-in}
\tilde{M}_{\tilde{z}_x|x} = \sum_{z} c_x(\tilde{z}_x|z) M_{z|x} \quad 
,
\ee 
for each $x.$ If the initial set of measurements is incompatible, we can ask whether this new set is incompatible or not. 

The set of all possible operations $\{\mathcal{C}_x\}$ is infinite. Furthermore, certain operations, such as generating random outputs, will always result in a set of measurements that is compatible, regardless of the original measurements. To avoid such intricacies, we focus on a particular class of operations where $c_x(\tilde{z}_x|z) \in \{0,1\}$ and $\tilde{d}_x < d$. These classical operations involve coarse-graining or relabelling/permuting outcomes (or a combination of both). This class contains a finite number of operations, and interestingly, any classical operation on outputs can be expressed as convex combinations of this class of operations. More importantly, certain quantum measurements remain incompatible even after all possible operations of this kind. We will refer to this class of operations as \textit{coarse-graining}. 
Coarse-graining typically refers to treating more than one outcome as equivalent, {\it i.e.},  multiple outcomes are clubbed together to form a single outcome. As a result, the effective number of outcomes is reduced. 

\subsection{Classical operation on inputs}

We can also apply classical operations to the inputs to generate a new set of measurements, each of which is a convex combination of the initial measurements (see FIG. \ref{fig}). This classical operation is represented by a probability distribution $\{e(x|\tilde{x})\}_{x,\tilde{x}}$, where $e(x|\tilde{x})\geqslant 0$ and $\sum_{x} e(x|\tilde{x})=1$ for each $\tilde{x} \in [m]$. Here, $e(x|\tilde{x})$ refers to the convex weightage of appearing measurement $x$ in the new measurement labelled by $\tilde{x}$. The new set of $m$ measurements, defined by the operators $\{\tilde{M}_{z|\tilde{x}}\}_{z,\tilde{x}}$, is given by:
\be
\tilde{M}_{z|\tilde{x}} = \sum_{x} e(x|\tilde{x}) M_{z|x} \quad ,
\ee 
for every $z$.

It is straightforward to see that if we allow all possible convex mixtures of measurements, the resulting measurements can become identical, regardless of the initial set.
Consider two convex mixtures that produce two new measurements from the a set of initial measurements. If at least one initial measurement can appear in both mixtures, we can choose the mixture weights so that this particular measurement has weight 1 (or close to 1) in both cases. As a result, the two new measurements will be identical (or nearly identical), making them compatible irrespective of the form of the initial measurement set. To avoid such trivial cases, we introduce a restricted class of convex mixing, which we will refer to as \textit{disjoint-convex-mixing} of measurements.
In disjoint-convex-mixing, each initial measurement is allowed to appear in the convex combination for only one new measurement, ensuring that no initial measurement contributes to more than one of the resulting measurements. Mathematically, for every $x$, only one of the numbers from the set $\{e(x|\tilde{x})\}_{\tilde{x}}$ is non-zero, and others are zero, which can be expressed as:
\be 
\forall x, \ \max_{\tilde{x}} \{e(x|\tilde{x}) \} =  \sum_{\tilde{x}} e(x|\tilde{x}) .
\ee 
In this work, we focus on disjoint-convex-mixing for classifying measurement incompatibility. We will point out later that there are measurements that remain incompatible after all possible disjoint-convex-mixing.

\subsection{Classical operations on both inputs and outputs}

It is also possible to realize new measurements by first taking convex-mixing on the inputs, followed by coarse-graining on the outputs.
In this case, the operators $\{\tilde{M}_{\tilde{z}_{\tilde{x}}|\tilde{x}}\}_{\tilde{z}_{\tilde{x}},\tilde{x}}$ defining the new measurements are
\be 
\tilde{M}_{\tilde{z}_{\tilde{x}}|\tilde{x}} = \sum_{z}\sum_{x} c(\tilde{z}_x|z) e(x|\tilde{x}) M_{z|x} \quad ,
\ee 
where $\{e(x|\tilde{x})\}$ and $\{c(\tilde{z}_x|z)\}_x$ denote the classical operation on inputs and outputs, respectively.


In summary, one can explore the effects of various classical operations on both inputs and outputs to determine whether the resulting measurements remain incompatible. However, this work specifically focuses on two key classes of operations: coarse-graining of outputs and disjoint-convex-mixing of inputs. In addition to the operational relevance of the two types of operations mentioned above, their significance for practical implementation is discussed in the following subsection.

 Recently, the concept of measurement simulability was introduced in  \cite{mathCarmeli,measuremestSimulability_guerini,prl17_oszmaniec} -- given a set of $m$ measurements, which other set of $n$ measurements (with $n < m$)  can simulate the initial $m$ measurements using classical pre- and post-processing on the $n$ measurements. Note that this definition generalizes measurement incompatibility, since for $n=1$ it reduces to the standard concept of measurement incompatibility. On the other hand, the present paper investigates the following: Given a set of measurements, we generate a new set by applying classical pre- and post-processing. We then ask whether there exists a single parent POVM that can simulate this new set of measurements via classical post-processing.

\subsection{Physical motivations for considering the two specific classical operations}

Let us now elaborate further the motivations for considering the two aforementioned specific classical operations on a set of measurements. Coarse-graining of measurement outcomes is a natural consequence of an observer's limitation in a practical scenario involving multi-outcome measurements. For example, consider the measurement of the spin-$z$ component (associated with the operator $J_z$) of the spin-$j$ system, where $j$ is very large. The possible outcomes of this measurement are nothing but the eigenvalues of $J_z$, denoted by $m \in \{-j, -j+1, \cdots, j\}$. In practice, this type of measurement is performed using a Stern-Gerlach type experiment. In such a practical situation, the concept of ``neighbouring outcomes'' arises \cite{prl07_brukner}. For example, the outcomes $m$ and $m+1$ are neighbouring outcomes in the real configuration space in a Stern-Gerlach type experiment. For large $j$, it is almost impossible for a device with finite precision to resolve these neighbouring outcomes in the observation screen -- giving rise to coarse-graining of measurement outcomes. This type of practical limitation is not only limited to large spins but also applicable to other multi-outcome measurements and measurements of continuous variables (e.g., position, momentum \cite{PhysRevLett.132.030202}). Note that coarse-graining of measurement outcomes has been invoked in the context of explaining classical limits of quantum mechanics \cite{prl07_brukner,rudnicki_UncertaintyCg,clQuCG_Jeong,pla280.2265,PhysRevA.94.062117,PhysRevA.100.042114}.

On the other hand, convex mixing of measurements is also inevitable in a practical situation. To explain it in more detail, let us consider the example of measuring $\sigma_z$ (Pauli spin observable) on a qubit. In practice, the observable to be measured ($\sigma_z$ in the present example),  defined by the relative direction of the inhomogeneous magnetic field in the Stern-Gerlach apparatus with respect to the direction of the incoming beam of spin-$1/2$ particles, may not be kept fixed in all experimental runs. Consequently, instead of $\sigma_z$, $\vec{\sigma}.\hat{n}$ will be measured ($\vec{\sigma}.\hat{n}$ = $\sigma_x n_x + \sigma_y n_y + \sigma_z n_z$ with $\sigma_x$, $\sigma_y$, $\sigma_z$ are three Pauli operators and $n_x^2 + n_y^2 + n_z^2 = 1$), where $\hat{n}$ will be different in different runs (all such $\hat{n}$ should be close to $\hat{z}$). Effectively, this will give rise to convex mixing of different measurements. 

Hence, these operations are extremely relevant in practical scenarios. The various layers of incompatibility and their operational implications under these operations are discussed in the subsequent sections.

\section{Classifying measurement incompatibility under coarse-graining and disjoint-convex-mixing}\label{sec4}


\subsection{Coarse-graining of outcomes}


Recall that the coarse-graining of a set of $d$-outcome measurements $\{M_{z|x}\}$ produces new measurements defined by
\be \label{gen-cg}
\tilde{M}_{\tilde{z}_x|x} = \sum_{z} c_x(\tilde{z}_x|z) M_{z|x} \,\, ,  \text{ where } \ c_x(\tilde{z}_x|z) \in \{0,1\}.
 \ee 
Before moving forward, let us define what we mean by \textit{trivial coarse-graining}. If the coarse-graining results in a measurement where one of the outcomes, say $\tilde{z}_x$, always occurs, that is, $\tilde{M}_{\tilde{z}_x|x} = \I$, we refer that operation as \textit{trivial coarse-graining} for the measurement labelled by $x$.
The fact that two incompatible measurements may not necessarily remain incompatible after certain coarse-graining motivates the following definition of full incompatibility with respect to (w.r.t.) coarse-graining.

\begin{definition}[Fully incompatible measurements w.r.t. coarse-graining]
A set of measurements $\{M_{z|x}\}$ is fully incompatible w.r.t. coarse-graining if they remain incompatible after all possible nontrivial coarse-graining. That is if the resultant set of measurements given by Eq. \eqref{gen-cg}
after all possible sets of nontrivial coarse-graining
is incompatible, then we call them fully incompatible. Note that, the coarse-graining can be different for different settings $x$.  
\end{definition}  

\begin{definition}[$k$-incompatible measurements w.r.t. coarse-graining]
A set of measurements $\{M_{z|x}\}$ is $k$-incompatible w.r.t. coarse-graining if they remain incompatible after all possible nontrivial coarse-graining that gives rise to at least $k$ outcome measurements. In other words, if the resultant set of measurements given by Eq.~\eqref{gen-cg}, where $\tilde{z}_x\in [d_x]$ and $d_x \geqslant k$ for all $x,$
is incompatible, then we call them $k$-incompatible.  \label{def_2}

\end{definition} 

For example, the three outcome rank-one projective measurement pair, defined by the following (un-normalized) vectors
\be 
M=\big\{|0\rangle,|1\rangle,|2\rangle \big\}
\ee and 
\be 
N=\big\{\ket{0}+|1\rangle ,\ket{0}-|1\rangle ,\ket{2} \big\} 
\ee 
is 3-incompatible, but not 2-incompatible since coarse-graining of the first two outcomes of these measurements yields compatible measurements.

\begin{result} \label{ob1}
A set of fully incompatible measurements is equivalent to 2-incompatible measurements w.r.t. coarse-graining.
\end{result}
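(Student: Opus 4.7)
The plan is to reduce Observation~\ref{ob1} to a simple structural fact about coarse-grainings, namely that every \emph{nontrivial} coarse-graining automatically produces a measurement with at least two outcomes. Once this is established, the equivalence between full incompatibility and 2-incompatibility is immediate, because both definitions quantify over exactly the same collection of coarse-grained measurement sets.

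First I would unpack the two definitions side by side. Being fully incompatible means that for \emph{every} choice of nontrivial coarse-grainings $\{\mathcal{C}_x\}$ applied to $\{M_{z|x}\}$, the resulting set $\{\tilde{M}_{\tilde{z}_x|x}\}$ in Eq.~\eqref{gen-cg} is incompatible. Being 2-incompatible means the same, but only quantifying over those nontrivial coarse-grainings for which the number of outcomes $\tilde{d}_x$ satisfies $\tilde{d}_x\ge 2$ for every $x$. Clearly the latter class of coarse-grainings is contained in the former, so full incompatibility trivially implies 2-incompatibility. The nontrivial direction is the converse.

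For the converse I would argue that the two classes of coarse-grainings actually coincide. Suppose some coarse-graining $\mathcal{C}_x$ collapses the outcomes of the $x$-th measurement to a single value $\tilde{z}_x$. Then $c_x(\tilde{z}_x|z)=1$ for every $z\in[d]$, so by Eq.~\eqref{gen-cg}
\begin{equation}
\tilde{M}_{\tilde{z}_x|x} \;=\; \sum_{z} M_{z|x} \;=\; \I,
\end{equation}
which is exactly the condition for $\mathcal{C}_x$ to be a trivial coarse-graining in the sense defined just before the two definitions. Contrapositively, any nontrivial coarse-graining on the $x$-th measurement must leave at least two outcomes, i.e.\ $\tilde{d}_x\ge 2$. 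Hence every nontrivial coarse-graining of the full set $\{M_{z|x}\}$ is automatically a coarse-graining producing at least two outcomes per measurement, so the quantifier in the 2-incompatibility definition ranges over precisely the same objects as in the full-incompatibility definition.

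I do not anticipate any serious obstacle here: the only subtlety to double-check is that ``trivial'' in the paper's sense is really equivalent to ``one-outcome after coarse-graining,'' which follows directly from the $\{0,1\}$-valuedness of $c_x(\tilde{z}_x|z)$ and the POVM normalization $\sum_z M_{z|x}=\I$. Combining the two inclusions then gives the claimed equivalence.
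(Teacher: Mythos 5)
Your proposal is correct and follows essentially the same route as the paper's proof: both reduce the equivalence to the observation that every nontrivial coarse-graining necessarily yields at least two outcomes, so the classes of coarse-grainings quantified over in the two definitions coincide. Your version is in fact somewhat more explicit than the paper's, since you verify via Eq.~\eqref{gen-cg} and the normalization $\sum_z M_{z|x}=\I$ that collapsing all outcomes to one is exactly the trivial case.
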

\begin{proof}
If a set of measurements is 2-incompatible, then it implies that the set remains incompatible after all possible coarse-graining of the outcomes such that the number of outcomes of each measurement in the newly formed set of measurements is greater than or equal to two. Also, the lowest number of outcomes of measurement is two for a nontrivial coarse-graining. Furthermore, if a set of measurements is $d$-incompatible, then, by definition, it is $n$-incompatible as well, where $n>d$, but the converse is not true. This proves Observation 1.
\end{proof}

\begin{result}\label{ob2}  
Consider two projective measurements, defined by $\{P_i\}, \,\, \text{and} \,\, \{Q_j\}$, where $i\in [d]$ and $j\in [d']$. Let $\{\mathcal{M}_k\}_k$ be the set of all proper subsets of $[d]$, and $\{\mathcal{N}_l\}_l$ be the set of all proper subsets of $[d']$. Then these two measurements are fully incompatible w.r.t. coarse-graining if and only if 
\be \label{c1}
 \left[ \sum_{i \in \mathcal{M}_k} P_i, \sum_{j \in \mathcal{N}_l} Q_j \right] \neq 0, \ \forall k,l .
\ee    
\end{result}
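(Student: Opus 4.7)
The plan is to invoke Observation~\ref{ob1} to reduce full incompatibility w.r.t.\ coarse-graining to $2$-incompatibility, and then exploit the fact that a set of projective measurements is jointly measurable if and only if all pairs of their effects commute. The key observation is that coarse-graining a projective measurement always yields another projective measurement: since $\{P_i\}$ are mutually orthogonal projectors summing to $\I$, any partition $\{A_r\}_r$ of $[d]$ into non-empty blocks produces new orthogonal projectors $P_{A_r} := \sum_{i\in A_r} P_i$ summing to $\I$. A non-trivial coarse-graining producing at least two outcomes corresponds precisely to such a partition with $m \geqslant 2$ blocks, and each block is then a non-empty proper subset of $[d]$.

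For the ``if'' direction I would take any pair of non-trivial coarse-grainings given by partitions $\{A_r\}$ of $[d]$ and $\{B_s\}$ of $[d']$, each with at least two blocks. Each $A_r$ (resp.\ $B_s$) is a non-empty proper subset of $[d]$ (resp.\ $[d']$), so the hypothesis \eqref{c1} forces $[P_{A_r}, Q_{B_s}] \neq 0$ for every $r,s$. Hence the coarse-grained projective measurements fail to have pairwise-commuting effects and are incompatible; by Observation~\ref{ob1} the original pair is then fully incompatible w.r.t.\ coarse-graining.

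For the ``only if'' direction I would argue the contrapositive. Assume there exist non-empty proper subsets $\mathcal{M}_k \subset [d]$ and $\mathcal{N}_l \subset [d']$ with $[\sum_{i\in\mathcal{M}_k} P_i,\, \sum_{j\in\mathcal{N}_l} Q_j] = 0$. Consider the two-outcome coarse-grainings induced by the partitions $\{\mathcal{M}_k,\, [d]\setminus \mathcal{M}_k\}$ and $\{\mathcal{N}_l,\, [d']\setminus \mathcal{N}_l\}$; these are non-trivial because both blocks are non-empty. Writing $P := \sum_{i\in\mathcal{M}_k} P_i$ and $Q := \sum_{j\in\mathcal{N}_l} Q_j$, the new effects are $\{P, \I-P\}$ and $\{Q, \I-Q\}$, and the identities $[P,\I-Q] = [\I-P,Q] = [\I-P,\I-Q] = -[P,Q] = 0$ show that all four cross commutators vanish. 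The two new projective measurements are therefore jointly measurable, witnessing failure of $2$-incompatibility and, via Observation~\ref{ob1}, of full incompatibility.

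The only real subtlety is notational rather than mathematical: one must read ``proper subset'' in \eqref{c1} as ``non-empty proper subset,'' since taking the empty set makes the commutator trivially zero and would void the statement. One must also note that the relabelling/permutation operations permitted by $c_x \in \{0,1\}$ do not alter the set of effects and hence do not affect commutation, so it is enough to restrict attention to genuine partition-type coarse-grainings. Once these conventions are fixed, the equivalence collapses to the standard commutativity criterion for joint measurability of projective measurements.
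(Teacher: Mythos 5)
Your proof is correct and follows essentially the same route as the paper's: the equivalence of compatibility and commutativity for sharp measurements, combined with the observation that non-trivial coarse-grainings of a projective measurement correspond to binnings into non-empty proper subsets; you merely spell out details (the reduction to partition-type operations, the two-outcome binning in the contrapositive, the vanishing of all cross-commutators) that the paper's very terse argument leaves implicit. The only blemish is the harmless sign slip $[\I-P,\I-Q]=+[P,Q]$ rather than $-[P,Q]$, which does not affect the conclusion since $[P,Q]=0$ at that point.
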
 
\begin{proof}
The result is a direct consequence of the fact that for sharp measurement, compatibility and commutativity are equivalent \cite{heinosaari2010nd}. Suppose $\exists \,\, k,l,$ such that the left-hand-side of \eqref{c1} is zero. Then consider the coarse-graining such that the resulting measurements will be $\{\sum_{i \in \mathcal{M}_k} P_i, \I - \sum_{i \in \mathcal{M}_k} P_i\}$ and $\{\sum_{j \in \mathcal{N}_l} Q_j, \I - \sum_{j \in \mathcal{N}_l} Q_j\}$. The resultant measurements will be compatible.  The converse direction holds true from the definition of fully incompatible w.r.t. coarse-graining. 
\end{proof}

\begin{theorem} \label{theo1}
The following condition is necessary but not sufficient for two rank-one projective measurements of dimension $\ge 4$, defined by $\{|\psi_i\ra\}$ and  $\{|\phi_j\ra\}$, to be fully incompatible w.r.t. coarse-graining: 
\be \label{c1psi}
\la \psi_i|\phi_j \ra \neq 0, \ \forall i,j .\ee 
However, the above condition is necessary and sufficient for two 3-dimensional rank-one projective measurements.
\end{theorem}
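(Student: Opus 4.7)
My plan is to reduce the theorem to Observation~\ref{ob2}, which characterizes two projective measurements as fully incompatible w.r.t.\ coarse-graining iff every commutator $[\sum_{i\in\mathcal{M}}P_i,\sum_{j\in\mathcal{N}}Q_j]$ over proper non-empty subsets $\mathcal{M},\mathcal{N}$ is non-zero. The three tasks are: (i) necessity for any $d$, (ii) sufficiency for $d=3$ via a finite subset enumeration, and (iii) an explicit dim-$4$ counter-example to establish non-sufficiency for $d\ge 4$.

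For necessity, I would suppose $\langle\psi_i|\phi_j\rangle=0$ for some index pair $(i,j)$ and apply Observation~\ref{ob2} with the singletons $\mathcal{M}=\{i\}$, $\mathcal{N}=\{j\}$. The corresponding operators are the rank-one projectors $P_i$ and $Q_j$ onto orthogonal vectors, so $P_i Q_j=Q_j P_i=0$; they commute, which by the Observation means the pair is not fully incompatible.

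For sufficiency in $d=3$, I exploit that every proper non-empty subset of $[3]$ is either a singleton $\{i\}$ or a complementary pair $[3]\setminus\{l\}$, so $\sum_{i\in\mathcal{M}}P_i\in\{P_0,P_1,P_2,\I-P_0,\I-P_1,\I-P_2\}$ and every commutator in Observation~\ref{ob2} collapses to $\pm[P_i,Q_j]$ for some single indices $i,j$. A direct calculation gives $[P_i,Q_j]=\langle\psi_i|\phi_j\rangle\,|\psi_i\rangle\langle\phi_j|-\overline{\langle\psi_i|\phi_j\rangle}\,|\phi_j\rangle\langle\psi_i|$, which vanishes iff $\langle\psi_i|\phi_j\rangle=0$ or $|\psi_i\rangle\propto|\phi_j\rangle$. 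The hypothesis rules out the former; the latter would force $\langle\psi_i|\phi_{j'}\rangle=0$ for every $j'\neq j$ by orthonormality of $\{|\phi_j\rangle\}$, again contradicting the hypothesis. Hence all commutators are non-zero.

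For non-sufficiency when $d\ge 4$, I would exhibit an explicit dim-$4$ pair: $|\psi_0\rangle,|\psi_1\rangle=(|0\rangle\pm|2\rangle)/\sqrt{2}$, $|\psi_2\rangle,|\psi_3\rangle=(|1\rangle\pm|3\rangle)/\sqrt{2}$ and $|\phi_0\rangle,|\phi_1\rangle=(|0\rangle\pm|1\rangle)/\sqrt{2}$, $|\phi_2\rangle,|\phi_3\rangle=(|2\rangle\pm|3\rangle)/\sqrt{2}$. All sixteen overlaps $\langle\psi_i|\phi_j\rangle$ equal $\pm 1/2\neq 0$, yet $P_0+P_1=|0\rangle\langle 0|+|2\rangle\langle 2|$ and $Q_0+Q_1=|0\rangle\langle 0|+|1\rangle\langle 1|$ are simultaneously diagonal in the computational basis and hence commute, so by Observation~\ref{ob2} the pair fails to be fully incompatible. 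The key structural obstruction---and the reason the $d=3$ argument does not generalize---is that for $d\ge 4$ there exist subsets $\mathcal{M}$ with $2\le|\mathcal{M}|\le d-2$ whose sums $\sum_{\mathcal{M}}P_i$ are no longer of the form $\pm P_i$ modulo identity; the commuting 2D projections $P_V=P_0+P_1$ and $P_W=Q_0+Q_1$ can then certify a compatible coarse-graining independently of the rank-one overlap condition. Analogous constructions based on two commuting 2D subspaces with generic bases of $V,V^\perp,W,W^\perp$ yield counter-examples for every $d\ge 4$, which is the main technicality to handle cleanly.
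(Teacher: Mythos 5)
Your proposal is correct and follows essentially the same route as the paper's proof: necessity via the commuting pair of rank-one projectors onto orthogonal vectors, sufficiency in dimension $3$ by noting every nontrivial coarse-graining reduces to a binary measurement with a rank-one effect (with the same orthonormality argument handling the degenerate case $|\psi_i\rangle\propto|\phi_j\rangle$), and non-sufficiency via a structurally identical dimension-$4$ counterexample whose pairwise coarse-grainings are simultaneously diagonal. The only cosmetic difference is that you route the argument explicitly through the commutator criterion of Observation~\ref{ob2}, which the paper invokes implicitly.
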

\begin{proof}
First, note that for sharp measurement, compatibility and commutativity are equivalent \cite{heinosaari2010nd}. Suppose $\exists i,j,$ such that $\la \psi_i|\phi_j\ra =0 $. Consider coarse-graining of all other outcomes except $i$ and $j$ for the two measurements. Then, the resulting measurements will be $\{|\psi_i\ra\!\la \psi_i|, \I - |\psi_i\ra\!\la \psi_i|\}$ and $\{|\phi_j\ra\!\la \phi_j|, \I - |\phi_j\ra\!\la \phi_j|\}$, which commute with each other (i.e., the resulting measurements are compatible) since $\la \psi_i|\phi_j\ra =0 $.  Thus, if the measurements are fully incompatible, condition \eqref{c1psi} holds. 

To show that \eqref{c1psi} is not sufficient, consider the following two $4$-dimensional rank-one projective measurements (with the normalization factor $1/\sqrt{2}$),
\bea \label{ex_cg}
\{\ket{0}+\ket{1}, \ket{0}-\ket{1} , \ket{2}+\ket{3} , \ket{2}-\ket{3} \} \nonumber \\
\{\ket{0}+\ket{2}, \ket{0}-\ket{2} , \ket{1}+\ket{3} , \ket{1}-\ket{3} \} .
\eea 
One can check that \eqref{c1psi} holds for all $i,j = 0,1,2,3$. But coarse-graining of outcomes $0,1$ and $2,3$ for both the measurements leads to 
\bea 
\{|0\ra\!\la 0| +|1\ra\!\la 1|, |2\ra\!\la 2| +|3\ra\!\la 3| \} , \nonumber  \\
\{|0\ra\!\la 0| +|2\ra\!\la 2|, |1\ra\!\la 1| +|3\ra\!\la 3| \},
\eea 
which are compatible.

In 3-dimension, say, the measurements are $M=\{|\psi_i\ra\}$ and $N=\{|\phi_j\ra\}$ with $i,j\in \{1,2,3\}.$ Now any non-trivial coarse-graining yields binary-outcome measurements of the form: $ \{|\psi_i\ra\!\la\psi_i|, \,\, \I-|\psi_i\ra\!\la\psi_i|\} \,\, \text{and} \,\,  \{|\phi_j\ra\!\la\phi_j|, \,\, \I-|\phi_j\ra\!\la\phi_j|\}$. It is easy to see that these two remain incompatible if and only if $[|\psi_i\ra\!\la\psi_i|,|\phi_j\ra\!\la\phi_j|] \neq 0$, which is equivalent to
$\la \psi_i|\phi_j \ra \neq 0,1$. In the case where $\la \psi_i|\phi_j \ra  = 1$, there exists another pair $(i,j')$ such that $\la \psi_{i}|\phi_{j'} \ra  = 0$; thus, \eqref{c1psi} implies fully incompatible in dimension $3$.
\end{proof}

 \begin{theorem} \label{theo2}
Two rank-one projective measurements $M\equiv \left\{|i\rangle \langle i| \right\}_{i=1}^{d}, \,\,  N\equiv \left\{|\psi_{j}\rangle \langle \psi_{j}| \right\}_{i=1}^{d}$ in dimension $d$ are fully incompatible w.r.t. coarse-graining
if for any pair of non-empty subsets $S,\tilde{S}\subset [d]$, which corresponds to the coarse-graining of $M$ and $N$, there exist $m, k,$ with $m\in S\,\, \text{and} \,\, k\in [d]\backslash S$, such that 
\be \label{cGpsi}
\sum_{j\in \tilde{S}} \braket{m}{\psi_j}\bra{\psi_j} \ket{k} \neq 0.
\ee 
\end{theorem}
\begin{proof}
It is sufficient to consider all possible binary outcome measurements obtained by coarse-graining of outcomes of $M \,\, \text{and} \,\, N$. Let after a particular coarse-graining $S, \tilde{S}$ the new measurements are $\tilde{M} \equiv \{M_{+}, M_{-}\} \,\, \text{and} \,\, \tilde{N} \equiv \{N_{+}, N_{-}\}$, where $M_+ = \sum_{i \in S} \ketbra{i}{i}, N_{+} = \sum_{j \in \tilde{S}} \ketbra{\psi_j}{\psi_j}, \,\, S,\tilde{S} \subset [d] \,\, \text{and} \,\, S, \tilde{S}$ are non empty set. Now the sufficient condition to prove the theorem is the condition $[M_{+},N_{+}] \ne 0$ for all possible coarse-graining i.e. for all $S, \tilde{S}$.   Now as $\{\ket{m}\bra{k}\} \,\, \text{with} \,\, m,k \in [d]$ forms a basis for a $d\times d$  hermitian matrix, it is sufficient if we find any non-vanishing matrix element of $[M_{+},N_{+}]$, expressed in the basis of $\{\ket{m}\bra{k}\} \,\, \text{for all possible coarse-graining, i.e. for all} \,\, S, \tilde{S}$.   Now we consider any such matrix element for a particular coarse-grainining $S \,\, \text{and} \,\, \tilde{S}$ of the two measurements $M$ and $N$, i.e. $[M_{+},N_{+}]_{(m,k)} = \bra{m} \,\, M_{+}N_{+}-N_{+}M_{+}\,\,\ket{k}, \,\, \text{with}$ 
\bea 
\bra{m} \,\, M_{+}N_{+}\ket{k}&=&\sum_{i\in S}\sum_{j\in \tilde{S}} \braket{i}{\psi_j} \braket{m}{i}\braket{\psi_j}{k} \nonumber \\
&=& \sum_{j\in \tilde{S}} \braket{m}{\psi_j}\braket{\psi_j}{k} \,\, ,
\eea for $m \in S$.  
Similarly, \be \label{NM_matrixElement}
\bra{m} \,\, N_{+}M_{+}\ket{k}=\sum_{i\in S}\sum_{j\in \tilde{S}} \braket{\psi_j}{i} \braket{m}{\psi_j}\braket{i}{k}, 
\ee which we can always set to be zero by choosing $k\notin S \,\, \text{and} \,\, k \in [d], \,\, \text{i.e.} \,\,  k \in [d]\backslash S \,\, \text{for all} \,\, S \,\, \text{and} \,\, \tilde{S}$.  This proves the theorem. 
\end{proof}

We know that the measurements in Eqs.(\ref{ex_cg}) are not fully incompatible with respect to coarse-graining. Hence, according to Theorem \ref{theo2}, these measurements should not satisfy the condition (\ref{cGpsi}). One can indeed check that the measurements (\ref{ex_cg}) do not satisfy the condition (\ref{cGpsi}).
On the other hand, Theorem \ref{theo2}
 can be further refined to show that certain pairs of measurements are fully incompatible.
\begin{coro}\label{coro-1}
Consider two rank-one projective measurements $M\equiv \left\{|i\rangle \langle i| \right\}_{i=1}^{d}, \,\,  N\equiv \left\{|\psi_{j}\rangle \langle \psi_{j}| \right\}_{i=1}^{d}$ in dimension $d$. If there exists $\ket{m}\in M $, such that 
\be \label{cGpsi2}
\sum_{j\in \tilde{S}} \braket{m}{\psi_j}\!\bra{\psi_j} \ket{k} \neq 0.
\ee
holds for all $\ket{k} \in M$ with $k\neq m$, and for all non-trivial subsets $\tilde{S}$ of $[d]$, then the measurements are fully incompatible w.r.t. coarse-graining.
\end{coro}
\begin{proof}
    The condition mentioned above implies the sufficient condition given in Theorem \ref{theo2}. Without loss of generality, we can assume that $m\in S$ for any non-trivial coarse-graining of the first measurement, and there always exists some $k$ such that \eqref{cGpsi} holds, as enforced by the stated condition. 
\end{proof}
 \begin{theorem}
     Consider two mutually unbiased bases in prime dimenions: $\{\ket{i}\}_{i=1}^d$ and $\{\ket{\psi}_j\}_{j=1}^d$, where $\ket{\psi}_j = (1/\sqrt{d}) \sum_{i=1}^d \omega^{ji}\ket{i} $ with $\omega$ being the $d$-th root of unity and $d$ is prime. These bases are fully incompatible w.r.t. coarse-graining.
 \end{theorem}
\begin{proof}
Substituting the expression for $\ket{\psi_j}$ and setting $\ket{m} = \ket{d}$ into the left-hand side of Eq.~\eqref{cGpsi2} yields the sum $\sum_{j \in \tilde{S}} \omega^{j(d-k)}$. To complete the proof as per Corollary \ref{coro-1}, we want to show that this sum is non-zero for all $k \in \{1, \cdots, d-1\}$. 

Since $d$ is prime, the integers modulo $d$ form a field. Consequently, the map $j \mapsto j(d-k)$ permutes the elements of the cyclic group $\mathbb{Z}_d$. The problem, therefore, reduces to showing that $\sum_{j \in \tilde{S}} \omega^{j} \neq 0$ for any non-empty proper subset $\tilde{S} \subset [d]$. This follows from a fundamental property of prime roots of unity: the $d$-th cyclotomic polynomial for a prime $d$ is given by $1 + x + \cdots + x^{d-1}$. This polynomial is the minimal polynomial of $\omega$ over the rationals and is irreducible \cite{Lang70}. Therefore, no non-trivial sum of a proper subset of the $d$-th roots of unity can vanish. This concludes the proof.
\end{proof}

\subsection{Disjoint-convex-mixing of measurements}

Consider a device that implements three different measurements, $M=\{M_{z}\}_z,N=\{N_{z}\}_z$, and $R=\{R_{z}\}_z$, all having the same number of outcomes $z\in [d]$. For the first input, it performs measurement $M$ with probability $q$ and $N$ with probability $(1-q)$. For the second input, it always performs measurement $R.$ Thus, the new measurement $Q_{(M,N)}$, realized through the convex-mixing, is
\be \label{QMN}
Q_{(M,N)} = \{q M_{z} + (1-q) N_{z} \}_z, 
\ee
where $q\in [0,1]$ is the weightage of the disjoint-convex-mixing. Even if $R$ is incompatible with $M$ and $N$ separately, $R$ is not necessarily incompatible with $Q_{(M,N)}$ for all values of $q$.
In a similar way, we introduce the notion of full incompatibility w.r.t. disjoint-convex-mixing.

\begin{definition} \label{definition3}
Three measurements $M,N \,\, \text{and} \,\, R$ are fully incompatible w.r.t. disjoint-convex-mixing if each of the pairs, $M$ and $Q_{(N,R)},$ $N$ and $Q_{(M,R)},$ $R$ and $Q_{(M,N)},$  are incompatible for all values of $q$, where $Q_{(\ ,\ )}$ is defined in \eqref{QMN}.
\end{definition} 

Consider three unbiased qubit binary outcome measurements $\{M_0,M_1\}, \{N_0,N_1\}, \{R_0,R_1\}$, 
\bea \label{qum}
&M_{z} = \frac{1}{2} \left( \I + (-1)^z \,\,\Vec{n_0} \cdot \Vec{\sigma} \right), \nonumber \\
&N_{z}= \frac{1}{2} \left( \I + (-1)^z \,\, \Vec{n_1} \cdot \Vec{\sigma} \right), \nonumber \\
&R_{z} =\frac{1}{2} \left( \I + (-1)^z \,\, \Vec{n_2} \cdot \Vec{\sigma} \right),
\eea with $z=0,1$  
and $||\Vec{n_i}|| \leqslant 1 \,\, \text{where} \,\, i\in \{0,1,2\}$.  A necessary and sufficient criterion for the incompatibility of two unbiased binary-outcome qubit measurements is given by  Busch in \cite{busch86} (also, see eq.(7) of \cite{guhne2023colloquium}).  By applying this criterion, we find that the above three measurements \eqref{qum} are fully incompatible w.r.t. disjoint-convex-mixing if and only if,
\be \label{c2}
|| \Vec{n}_i + q \Vec{n}_{j} + (1-q) \Vec{n}_{k} || + || \Vec{n}_i - q \Vec{n}_{j} - (1-q) \Vec{n}_{k} || > 2 , 
\ee 
for all $q$, and for all $(i,j,k)\in \{(0,1,2),(1,2,0),(2,0,1)\}$. 



\begin{theorem}\label{r3}
If three-qubit measurements \eqref{qum} are such that $\Vec{n}_i$'s are in the same plane of the Bloch sphere, then they are not fully incompatible w.r.t. disjoint-convex-mixing.

\end{theorem}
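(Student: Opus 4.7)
To prove that the three measurements in \eqref{qum} with coplanar Bloch vectors are not fully incompatible with respect to disjoint-convex-mixing, I will exhibit a single cyclic triple $(i,j,k)$ and a single value $q^\ast \in [0,1]$ for which the strict inequality in Eq.~\eqref{c2} fails, i.e.\ the sum on the left-hand side is at most $2$.

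The first step is an elementary algebraic identity. If $\vec{v}:=q^\ast \vec{n}_j+(1-q^\ast)\vec{n}_k$ is parallel or antiparallel to $\vec{n}_i$, so that $\vec{v}=\mu\vec{n}_i$ for some real $\mu$, then
$$
\|\vec{n}_i+\vec{v}\|+\|\vec{n}_i-\vec{v}\|=\bigl(|1+\mu|+|1-\mu|\bigr)\|\vec{n}_i\| = 2\max(\|\vec{n}_i\|,\|\vec{v}\|)\leq 2,
$$
the last bound because $\|\vec{n}_i\|\leq 1$ and, by the triangle inequality, $\|\vec{v}\|\leq q^\ast\|\vec{n}_j\|+(1-q^\ast)\|\vec{n}_k\|\leq 1$. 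So the problem reduces to a purely geometric claim: in the coplanar case there exist a cyclic permutation $(i,j,k)$ and $q^\ast \in [0,1]$ such that the point $q^\ast\vec{n}_j+(1-q^\ast)\vec{n}_k$ on the segment joining $\vec{n}_k$ and $\vec{n}_j$ lies on the line $L_i:=\mathbb{R}\vec{n}_i$ through the origin in the direction of $\vec{n}_i$.

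To establish the geometric claim I will place the common plane in $\mathbb{R}^2$ with $\vec{n}_0$ along the positive $x$-axis. Degenerate configurations are absorbed first. If some $\vec{n}_i=0$, the corresponding measurement is $\I/2$, which is trivially compatible with any other measurement, so full incompatibility already fails. If two of the three vectors lie on a common line through the origin, then taking $q^\ast=0$ or $q^\ast=1$ immediately makes $\vec{v}$ parallel to the third vector for a suitable cyclic relabelling. In the remaining generic case, each $\vec{n}_i$ has a well-defined angle $\theta_i$, the three directions are distinct modulo $\pi$, and $\theta_0=0$. If $\vec{n}_1$ and $\vec{n}_2$ have $y$-coordinates of opposite signs, the segment joining them crosses the $x$-axis $L_0$, so $(i,j,k)=(0,1,2)$ works. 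Otherwise, by reflecting across the $x$-axis if necessary, I may assume $0<\theta_1,\theta_2<\pi$. If $\theta_1<\theta_2$, then $\vec{n}_0$ and $\vec{n}_2$ lie on opposite sides of $L_1$ (witnessed by $\sin(-\theta_1)<0$ and $\sin(\theta_2-\theta_1)>0$), so the segment from $\vec{n}_2$ to $\vec{n}_0$ crosses $L_1$ and the triple $(1,2,0)$ does the job; the case $\theta_1>\theta_2$ is symmetric using $(2,0,1)$.

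The main obstacle is keeping the planar case analysis tight and making sure that every degenerate configuration (vanishing Bloch vectors, collinear pairs, vectors on the boundary between sectors) is correctly folded into one of the easy subcases. Once the geometric intersection statement is secured, the bound sum~$\leq 2$ is just the one-line identity above, and the violation of the strict inequality in \eqref{c2} immediately yields non-full-incompatibility.
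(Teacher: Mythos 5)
Your proof is correct and follows the same overall strategy as the paper's: exhibit a cyclic triple $(i,j,k)$ and a weight $q$ for which $q\vec{n}_j+(1-q)\vec{n}_k$ is proportional to $\vec{n}_i$, whereupon the left-hand side of Eq.~\eqref{c2} collapses to at most $2$ and full incompatibility fails. The difference is in the execution, and your version is tighter in two respects. First, you actually prove the geometric existence claim (the planar case analysis via the signed distances $\sin(\theta-\theta_i)$, plus the degenerate subcases of a vanishing Bloch vector and of two directions collinear through the origin), whereas the paper simply asserts it. Second, your identity $\|\vec{n}_i+\vec{v}\|+\|\vec{n}_i-\vec{v}\|=2\max(\|\vec{n}_i\|,\|\vec{v}\|)\le 2$ is valid for an arbitrary real proportionality factor $\mu$, which matters: the paper parametrizes the intersection as $q\vec{n}_j+(1-q)\vec{n}_k=c\,\vec{n}_i$ with $c\in(0,1]$, but for, e.g., three coplanar unit vectors at mutual angles of $120^{\circ}$ the only admissible intersection has $c=-1/2<0$, so the paper's stated normal form is not always attainable (its final bound $|1+c|+|1-c|=2$ still holds for all $|c|\le 1$, so the conclusion survives, but the intermediate claim as written does not). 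Your formulation closes that loophole cleanly; what the paper's shorter argument buys is only brevity.
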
 

\begin{proof}
If the three $\Vec{n}_i$'s are in the same plane, then there exists at least one triple $(i,j,k)$ from any set $\{i,j,k\}$ such that 
\bea 
q \Vec{n}_{j} + (1-q) \Vec{n}_{k} & \propto & \Vec{n}_{i} \nonumber \\
&=& c \,\, \Vec{n}_{i},            
\eea
where $ q \in [0,1] \,\, \text{and} \,\, c$ is the proportionality constant, $ c\in \Bbb{R}, $ satisfying $|| c\,\,\Vec{n}_i ||\le 1.$ Now, as $\frac{1}{2} \left( \I  \pm \,\,\Vec{n_i} \cdot \Vec{\sigma} \right)$ and $\frac{1}{2} \left( \I \pm \,\, c \,\,\Vec{n_i} \cdot \Vec{\sigma} \right)$ are always compatible which is very straight-forward, also it follows from the compatibility condition given by Busch in \cite{busch86}, i.e.,  Eq.(7) of \cite{guhne2023colloquium}; this proves the theorem.
\end{proof}

\begin{theorem}\label{r4}
Consider three-qubit measurements \eqref{qum} are such that $\Vec{n}_0 = \nu_0 \hat{x}$, $\Vec{n}_1 = \nu_1 \hat{y}$, $\Vec{n}_2 = \nu_2 \hat{z}$ with $0 \leqslant \nu_0 ,\nu_1, \nu_2\leqslant 1$, that is, the noisy version of Pauli observables,
\bea \label{noisy_pauli}
&M_z = \frac{1}{2} \left( \I + (-1)^z \nu_0 \sigma_x \right) = \nu_0 \left(\frac{ \I + (-1)^z \sigma_x }{2}\right) + (1-\nu_0 )\frac{\I}{2} , \nonumber \\ 
&N_z = \frac{1}{2} \left( \I + (-1)^z \nu_1 \sigma_y \right) = \nu_1 \left(\frac{ \I + (-1)^z \sigma_y }{2}\right) + (1-\nu_1 )\frac{\I}{2} , \nonumber \\
&R_z = \frac{1}{2} \left( \I + (-1)^z \nu_2 \sigma_z \right) = \nu_2 \left(\frac{ \I + (-1)^z \sigma_z }{2}\right) + (1-\nu_2 )\frac{\I}{2},\nonumber \\
\eea 
with $z=0,1$. These measurements are fully incompatible w.r.t. disjoint-convex-mixing if and only if 
\be   \label{c3}
\min  \bigg\{ \nu^2_0 + \frac{\nu^2_1 \nu^2_2}{\nu^2_1 + \nu^2_2}, \nu^2_1 +\frac{\nu^2_0 \nu^2_2}{\nu^2_0 + \nu^2_2},  \nu^2_2 + \frac{\nu^2_0 \nu^2_1}{\nu^2_0 + \nu^2_1} \bigg\} > 1.
\ee   
\end{theorem}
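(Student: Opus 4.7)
The plan is to plug the axis-aligned Bloch vectors $\vec{n}_0=\nu_0\hat{x}$, $\vec{n}_1=\nu_1\hat{y}$, $\vec{n}_2=\nu_2\hat{z}$ directly into the criterion \eqref{c2} and exploit mutual orthogonality to collapse the two norms into a single, clean expression. For any triple $(i,j,k)\in\{(0,1,2),(1,2,0),(2,0,1)\}$, pairwise orthogonality gives
\begin{equation*}
\|\vec{n}_i \pm (q\vec{n}_j+(1-q)\vec{n}_k)\|^2 = \nu_i^2 + q^2\nu_j^2 + (1-q)^2\nu_k^2,
\end{equation*}
so both terms on the left-hand side of \eqref{c2} are equal. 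Consequently, \eqref{c2} holding for all $q\in[0,1]$ is equivalent to
\begin{equation*}
\nu_i^2 + q^2\nu_j^2 + (1-q)^2\nu_k^2 > 1 \quad \text{for all } q\in[0,1],
\end{equation*}
for each of the three triples.

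Next I would minimize the quadratic $f(q)=q^2\nu_j^2+(1-q)^2\nu_k^2$ in $q$. Differentiating gives the unconstrained minimizer $q^\ast = \nu_k^2/(\nu_j^2+\nu_k^2)$, which lies in $[0,1]$ automatically (and the degenerate case $\nu_j=\nu_k=0$ is harmless since the bound $\nu_i^2>1$ is vacuously impossible and consistent with the claimed inequality failing). Substituting $q^\ast$ back gives
\begin{equation*}
\min_{q\in[0,1]} f(q) = \frac{\nu_j^2\nu_k^2}{\nu_j^2+\nu_k^2},
\end{equation*}
so the condition for the triple $(i,j,k)$ becomes $\nu_i^2 + \nu_j^2\nu_k^2/(\nu_j^2+\nu_k^2)>1$. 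Taking the minimum over the three triples yields exactly \eqref{c3}, establishing both directions of the equivalence.

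There is no real obstacle here; the proof is essentially two lines of algebra plus a one-variable optimization. The only things to be slightly careful about are (i) verifying that the optimizer $q^\ast$ actually falls in $[0,1]$ (so one does not mistakenly pick up a boundary minimum), and (ii) handling the degenerate cases where one or two of the $\nu$'s vanish, in which case the relevant triple's inequality reduces to $\nu_i^2>1$, still consistent with \eqref{c3}. One should also note that the equivalence of \eqref{c2} and incompatibility invoked here is precisely the Busch criterion already cited in the main text; given that, the argument reduces entirely to a direct computation enabled by the orthogonality of the three Bloch directions.
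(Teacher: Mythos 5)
Your proposal is correct and follows essentially the same route as the paper: orthogonality of the three Bloch directions reduces the left-hand side of \eqref{c2} to $2\sqrt{\nu_i^2+q^2\nu_j^2+(1-q)^2\nu_k^2}$, the quadratic is minimized at the interior point $q^\ast=\nu_k^2/(\nu_j^2+\nu_k^2)$, and substituting back yields $\nu_i^2+\nu_j^2\nu_k^2/(\nu_j^2+\nu_k^2)>1$ for each triple. Your extra remarks on the degenerate case $\nu_j=\nu_k=0$ are a small bonus the paper omits, but the core argument is identical.
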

\begin{proof}
In terms of $\nu_i$, the left hand side of \eqref{c2} becomes
$2 \sqrt{\nu^2_i + q^2 \nu^2_j+ (1-q)^2 \nu^2_k }$.
Note that the minimum of $q^2 \nu^2_j+ (1-q)^2 \nu^2_k$ occurs at $\tilde{q} =\nu_k^2/(\nu^2_j + \nu^2_k)$. 
Since $0 \leq \nu_k^2/(\nu^2_j + \nu^2_k) \leq 1$ for any $\nu_j, \nu_k \in [0,1]$, the above-mentioned minimum can always be achieved with a suitable choice of $q$. Hence, we have that
\begin{align}
2 \sqrt{\nu^2_i + q^2 \nu^2_j+ (1-q)^2 \nu^2_k } &\geq 2 \sqrt{\nu^2_i + \tilde{q}^2 \nu^2_j+ (1-\tilde{q})^2 \nu^2_k } \nonumber \\
& = 2 \sqrt{\nu^2_i + \frac{ \nu^2_j \nu^2_k}{\nu^2_j + \nu^2_k} }
\end{align}
Thus, the right-land-side of \eqref{c2} is greater than 2 for all values of $q$ if and only if
\be 
\nu^2_i + \frac{ \nu^2_j \nu^2_k}{\nu^2_j + \nu^2_k} > 1 . 
\ee 
Taking all the three possible values of $(i,j,k)$ we get the condition \eqref{c3}.
\end{proof}
Clearly, the three Pauli observables are fully incompatible w.r.t. disjoint-convex-mixing, and moreover, if we take an equal amount of noise $\nu_0=\nu_1=\nu_2= \nu$, then \eqref{c3} implies $\nu > \sqrt{2/3}$. It is worth mentioning that the three Pauli observables are incompatible in the usual sense (given by Eq. (\ref{def_comp}))  for $\nu > 1/\sqrt{3}$, which was shown in \cite{Heinosaari2008}. Thus, for $1/\sqrt{3} < \nu \le \sqrt{2/3}$, the three measurements are not fully incompatible w.r.t. disjoint-convex-mixing, though they are incompatible. \\


We can generalize the notion of full incompatibility w.r.t. disjoint-convex-mixing.

\begin{definition}[$k$-incompatible measurements w.r.t. disjoint-convex-mixing] \label{def_4}
Given a set of $n$ measurements, the measurements are $k$-incompatible w.r.t. disjoint-convex-mixing if, after taking every possible disjoint-convex-mixing that yields $k$ number of measurements, the resulting measurements are incompatible. 
\end{definition}

\begin{definition}[Fully incompatible measurements w.r.t. disjoint-convex-mixing]
A set of $n$ measurements is fully incompatible w.r.t. disjoint-convex-mixing if it is $k$-incompatible for all $k = 2,\cdots,n$. 
\end{definition}

\begin{result}
Fully incompatible measurements w.r.t. disjoint-convex-mixing imply that every pair of measurements from that set is incompatible. The reverse implication does not hold.
\end{result}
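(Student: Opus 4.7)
The plan is to handle the two directions separately.

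For the forward implication, I will invoke the fact that full incompatibility with respect to disjoint-convex-mixing entails $2$-incompatibility (Definition \ref{def_4} with $k=2$). Given the set $\{M^{(1)},\dots,M^{(n)}\}$ and an arbitrary pair of indices $(i,j)$, I exhibit a specific disjoint-convex-mixing producing exactly two new measurements, namely the assignment with weights $e(i|1)=e(j|2)=1$ and $e(k|\tilde{x})=0$ for every other combination of $(k,\tilde{x})$. Verifying that this is a legitimate disjoint-convex-mixing amounts to noting that (a) for each $x$ at most one $e(x|\tilde{x})$ is nonzero, so the condition $\max_{\tilde{x}}\{e(x|\tilde{x})\} = \sum_{\tilde{x}}e(x|\tilde{x})$ holds, and (b) $\sum_{x}e(x|\tilde{x})=1$ for both $\tilde{x}=1,2$. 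The two resulting measurements coincide with $M^{(i)}$ and $M^{(j)}$, so $2$-incompatibility forces this pair to be incompatible. Since $(i,j)$ was arbitrary, every pair of original measurements is incompatible.

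For the reverse direction, a single counterexample suffices. I plan to take three unbiased binary-outcome qubit measurements of the form \eqref{qum} with the coplanar Bloch vectors $\vec{n}_0 = \hat{x}$, $\vec{n}_1 = \hat{z}$, and $\vec{n}_2 = (\hat{x}+\hat{z})/\sqrt{2}$. Since all three vectors lie in the $xz$-plane, Theorem \ref{r3} immediately implies that this set is not fully incompatible with respect to disjoint-convex-mixing. What remains is pairwise incompatibility, which follows from Busch's criterion: one computes $\|\vec{n}_0 + \vec{n}_1\| + \|\vec{n}_0 - \vec{n}_1\| = 2\sqrt{2}$, and by symmetry the other two pair-sums each evaluate to $\sqrt{4+2\sqrt{2}}$; all three strictly exceed $2$, establishing incompatibility of every pair.

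The main subtlety is in the forward direction, namely the legitimacy of the pair-extraction mixing: the $n-2$ measurements outside the chosen pair receive zero weight in both new measurements and are thereby effectively discarded. The formal $\max=\sum$ condition in the definition of disjoint-convex-mixing is trivially satisfied in this ``unused'' case (both sides are zero), so the construction is admissible. Once this interpretative point is settled, both directions follow cleanly with no serious computation beyond Busch's inequality.
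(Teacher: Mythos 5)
Your proof is correct, and the forward direction is essentially the paper's argument in contrapositive form: both hinge on the observation that assigning weight one to each member of a chosen pair and weight zero to everything else is a legitimate disjoint-convex-mixing into two groups, so $2$-incompatibility forces every extracted pair to be incompatible. Your explicit check of the $\max_{\tilde{x}}\{e(x|\tilde{x})\}=\sum_{\tilde{x}}e(x|\tilde{x})$ condition for the ``unused'' measurements is a welcome bit of care that the paper glosses over. Where you genuinely diverge is the counterexample for the failure of the converse. The paper takes the three noisy Pauli measurements of Eq.~\eqref{noisy_pauli} with equal visibility in the window $0.71<\nu\leqslant 0.81$ and appeals to a semi-definite-programming computation (and a citation) to certify that they are pairwise incompatible yet not fully incompatible; your example instead takes three sharp measurements with coplanar unit Bloch vectors $\hat{x}$, $\hat{z}$, $(\hat{x}+\hat{z})/\sqrt{2}$, gets ``not fully incompatible'' for free from Theorem~\ref{r3}, and verifies pairwise incompatibility by the closed-form Busch sums $2\sqrt{2}$ and $\sqrt{4+2\sqrt{2}}$, both exceeding $2$. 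Your route is entirely analytic and self-contained within results already proved in the paper, which is arguably cleaner; the paper's noisy-Pauli example buys something different, namely a quantitative noise window that connects directly to the robustness discussion and to Theorem~\ref{r4}. Both counterexamples are valid, so the result stands either way.
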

\begin{proof}
Consider a set of $n$ measurements, in which there is a pair of measurements $\{M_z\}$ and $\{N_z\}$ that are compatible. Now if we make two partitions where $\{M_z\}$ and $\{N_z\}$ are in different partitions and the disjoint-convex-mixing is such that the probabilities of arising all other measurements are zero, then the resultant measurement pair must be compatible. This is true for any compatible pair of measurements. Thus, if the measurements are fully incompatible w.r.t. disjoint-convex-mixing, every pair of measurements must necessarily be incompatible.

The converse is not true. Consider the three noisy Pauli measurements of Eq.\eqref{noisy_pauli}. It can be shown by using semi-definite programming that if $0.71<\nu\leqslant  0.81$, the measurements are pairwise incompatible, but it is not fully incompatible w.r.t. disjoint-convex-mixing \cite{prl19_diIncom}. 
\end{proof}

Some explicit examples of $k$-incompatible measurements w.r.t. both of these operations are provided in the subsequent section. 

\section{Robustness under noise for different levels of incompatibility} \label{sec5}
In this section, we analyze the role of noise on quantum measurements and study how the incompatibility properties depend on it.  More specifically, we study the variation of the critical amount of noise in the different layers of incompatibility under these classical operations (coarse-graining and disjoint-convex-mixing).  Due to the ubiquitous nature of noise, it is pertinent to study the extent to which noise could
be tolerated by a set of measurements while still retaining their incompatibility. We take a noisy version of mutually unbiased bases measurements (MUBs), 
\be 
M_{i|x} =  \nu\ket{\phi_{i|x}}\bra{\phi_{i|x}}+\frac{(1-\nu)}{d}\I_{d} ,
\ee
where $\{\ket{\phi_{i|x}}\}_{i,x}$ 
form mutually unbiased bases measurements in $\mathbb{C}^d$. Here $\nu$ is the robustness parameter (or visibility parameter) and $(1-\nu)$ is the noise parameter, $0\le \nu \le 1$. When the noise parameter is zero (i.e., robustness, $\nu=1$), the measurements are fully incompatible, and when the noise parameter is one (i.e., the robustness, $\nu=0$), the measurements are trivial and compatible.  Our aim is to obtain the critical value of the robustness parameter above which the measurements remain incompatible before and after different classical operations. 

To check the compatibility, i.e., the existence of a parent POVM, we use the method described in \cite{guhne2023colloquium}. This can be cast as a semi-definite programming (SDP) problem that takes a set of measurements $\{M_{z_x|x}\}$ and deterministic classical post-processings $p(z_x|x,\lambda)$ as input, and checks whether the measurements are compatible or not, subject to the constraints
\bea 
 & \sum_{\lambda} p(z_x|x,\lambda)G_{\lambda}=M_{z_x|x} \forall x, z_x,\\
 &    \sum_{\lambda}G_{\lambda}=\I, \\
 &  G_{\lambda} \ge \mu \I \label{positivity},
\eea
where $\mu$ is the optimization parameter. This method finds the maximum value of $\mu$ for each $\{p(z_x|x,\lambda)\}$. If this optimization returns a negative value of $\mu$, then the constraint of Eq.\eqref{positivity} cannot be fulfilled, which implies that the measurements $\{M_{z_x|x}\}$ are incompatible. Otherwise, they are compatible. 

 As discussed earlier, the degree of incompatibility of measurements may be reduced due to coarse-graining of outcomes and convex mixing of measurements. In the subsequent subsection, we present an explicit analysis of how the degree of incompatibility (quantified by the robustness parameter \citep{heinosaari2015noise}) varies as the measurements are subjected to coarse-graining and disjoint-convex-mixing, in dimensions $3$ and $4$. For simplicity, we have considered the measurements to be MUBs (as discussed earlier), where the number of outcomes are same as that of their dimension. Also, as MUBs exhibit the highest degree of incompatibility, \citep{designolle2019quantifying} it is a natural choice to use them for the prominent observation of the various layers of incompatibility under the classical operations. However, one can apply this formalism to any set of measurements in a similar fashion. 
 \subsection{Robustness of incompatibility under coarse-graining of outcomes}
\textbf{Dimension $3$.}  Let $\{M_i\}$ and $\{N_j\}, i,j \in \{1, 2, 3\}$ be two three-outcome measurements acting on $\mathbb{C}^3$, where 
\begin{align}
   &M_i=\nu\ket{i}\bra{i}+(1-\nu)\frac{\I}{3}, \nonumber\\ 
   &N_j=\nu\ket{\psi_j}\bra{\psi_j}+(1-\nu)\frac{\I}{3} \label{measurement1},
   \end{align}
   with
   \begin{align}
     &\ket{\psi_0}=\frac{1}{\sqrt{3}}(\ket{0}+\ket{1}+\ket{2}), \nonumber \\
&\ket{\psi_1}=\frac{1}{\sqrt{3}}(\ket{0}+\omega\ket{1}+\omega^2\ket{2}), \nonumber \\  &\ket{\psi_2}=\frac{1}{\sqrt{3}}(\ket{0}+\omega^2\ket{1}+\omega\ket{2})  \label{measurement},
\end{align}
 $\omega$ being the cube roots of unity.  It can be easily checked by SDP that the measurements in Eqs.\eqref{measurement1} and \eqref{measurement} are incompatible (or equivalently, $3$-incompatible using the definition \ref{def_2} ) for robustness parameter  $(\nu) > 0.683$. Now, any non-trivial coarse-graining reduces the measurement outcomes to two, so one can check the incompatibility after all possible choices of such non-trivial coarse-grainings. The measurements are $2$-incompatible (by the definition \ref{def_2}) for $\nu > 0.711$. For $0.683 < \nu \le 0.711$, the measurements are $3$-incompatible, but not $2$-incompatible. This represents the case where the initial measurements, although taken to be incompatible, become compatible after a non-trivial coarse-graining. However, for $\nu > 0.711$, the measurements are $2$-incompatible (and hence $3$-incompatible), suggesting that they remain incompatible after all non-trivial coarse-grainings. These findings suggest that $2$-incompatible measurements exhibit a stronger form of incompatibility compared to the $3$-incompatible measurements (hence supporting Observation \ref{ob1}), and hence are potential candidates for any resource-theoretic applications that harness incompatibility. Note that for three outcome measurements, any non-trivial coarse-graining will reduce the outcomes to two. So, the number of outcomes is always $\le 3$ (where $3$ corresponds to the case where there is no coarse-graining). So, $4$-incompatibility is not applicable (N.A.) here.
 
\textbf{Dimension $4$.} For checking incompatibility w.r.t coarse-graining in $\mathbb{C}^4$, the same procedure is repeated taking two POVM measurements with four outcomes each. The corresponding measurements are
$\{M'_i\}, \{N'_j\} , i,j \in \{1,2,3,4\}$:
\bea\label{measurement2}
M'_i &=& \nu \ket{i}\bra{i}+(1-\nu)\frac{\I}{4}, \nonumber \\
N'_j &=& \nu \ket{\psi'_j}\bra{\psi'_j}+(1-\nu)\frac{\I}{4}, 
\eea
with
\bea\label{description2}
\ket{\psi'_0}&=& \frac{1}{2}(\ket{0}+\ket{1}+\ket{2}+\ket{3}), \nonumber\\
\ket{\psi'_1}&=& \frac{1}{2}(\ket{0}+\ket{1}-\ket{2}-\ket{3}), \nonumber\\
\ket{\psi'_2}&=& \frac{1}{2}(\ket{0}-\ket{1}-\ket{2}+\ket{3}), \nonumber\\
\ket{\psi'_3}&=& \frac{1}{2}(\ket{0}-\ket{1}+\ket{2}-\ket{3}).
\eea

 These measurements in Eqs.\eqref{measurement2} and \eqref{description2} are incompatible (or equivalently $4$-incompatible, using the definition \ref{def_2}) for $\nu > 0.666$. In this case, any non-trivial coarse-graining will reduce the measurement outcomes to three or two, and hence we shall have $3$-incompatiblity and $2$-incompatiblity respectively. For $0.666 < \nu \le 0.675$, the incompatible measurements become compatible after all possible coarse-grainings that reduce the outcomes to at least three. Further reducing the number of outcomes by coarse-graining gives a tighter bound for the critical value of the robustness parameter above which the measurements are incompatible. Hence, for $0.675 < \nu \le 0.720$, the measurements are $3$-incompatible but become compatible if one further reduces the number of outcomes by coarse-graining. These measurements exhibit the strongest form of incompatibility for $\nu > 0.720$ (as also discussed in Observation \ref{ob1}), i.e., they remain incompatible after all possible non-trivial coarse-grainings. The results for coarse-graining of measurement outcomes in dimensions $3$ and $4$ are summarized in TABLE \ref{tab:table-name1}.

 \begin{table}[h!] 
\begin{tabular}{ c| c| c }
 & \textbf{In dimension $3$} & \textbf{In dimension $4$}  \\
 \hline 
 \textbf{$4$-incompatible}  & N.A. & 0.666 \\  
 \hline  
 \textbf{$3$-incompatible} & 0.683 & 0.675 \\
  \hline
 \textbf{$2$-incompatible} & 0.711 & 0.720 \\
 \hline
\end{tabular}
\caption{\label{tab:table-name1}Critical values of robustness for MUBs w.r.t. coarse-graining in  $\mathbb{C}^3$ and $\mathbb{C}^{4}$, above which the measurements are incompatible. }
\end{table}  

\subsection{Robustness of incompatibility under disjoint-convex-mixing of measurements} 

\textbf{Dimension $3$.} For checking incompatibility w.r.t. disjoint-convex-mixing (CM), a minimum of three measurements are needed. The two measurements are $\{M_i\}$ and $\{N_j\}$, as given in Eqs.\eqref{measurement1} and Eq.\eqref{measurement}. The third measurement is  $\{R_{k}\}, k\in \{1,2,3\} $ where
\begin{equation}
R_k=\nu\ket{\phi_k}\bra{\phi_k}+(1-\nu)\frac{\I}{3}.\label{measurement3}
\end{equation}
with 
\begin{align}
     &\ket{\phi_0}=\frac{1}{\sqrt{3}}(\omega\ket{0}+\ket{1}+\ket{2}), \nonumber \\
&\ket{\phi_1}=\frac{1}{\sqrt{3}}(\ket{0}+\omega\ket{1}+\ket{2}), \nonumber \\  &\ket{\phi_2}=\frac{1}{\sqrt{3}}(\ket{0}+\ket{1}+\omega\ket{2})  \label{phi},
\end{align}

     These three measurements are incompatible (equivalently $3$-incompatible, by the definition \ref{def_4}) for $\nu > 0.537$. From these three incompatible measurements, one can effectively reduce the number of measurements to two, by taking a new measurement as the disjoint-convex-mixing of any two measurements from the initial set of three measurements (this can be done in $3$ ways), while keeping the third measurement same. This process of disjoint-convex-mixing may reduce the degree of incompatibility, as discussed before and evidenced for $0.537 < \nu \le 0.764$. This corresponds to the case where the two measurements become compatible after considering all possible disjoint-convex-mixing across all partitions. For $\nu > 0.764$, the measurements are incompatible, suggesting that this is the critical value above which, the measurements are robust against disjoint-convex-mixing. 


\textbf{Dimension $4$.} Consider the measurements $\{M'_{i}\},\{N'_{j}\} $ as defined in Eqs. \eqref{measurement2} and \eqref{description2} and another measurement $\{R'_{k}\}$ where $ i,j,k \in \{1,2,3,4\}$,
\begin{equation}
    R'_k =\nu\ket{\phi'_k}\bra{\phi'_k}+(1-\nu)\frac{\I}{4}  \label{phi'}
    \end{equation}
with 
\bea
 \ket{\phi'_0} &=& \frac{1}{2}(\ket{0}-\ket{1}-\mathbbm{i}\ket{2} - \mathbbm{i} \ket{3}),\nonumber \\
\ket{\phi'_1} &=& \frac{1}{2}(\ket{0}-\ket{1}+\mathbbm{i}\ket{2}+\mathbbm{i}\ket{3}),\nonumber \\
\ket{\phi'_2} &=& \frac{1}{2}(\ket{0}+\ket{1}+\mathbbm{i}\ket{2}-\mathbbm{i}\ket{3}),\nonumber \\
\ket{\phi'_3} &=& \frac{1}{2}(\ket{0}+\ket{1}-\mathbbm{i}\ket{2}+\mathbbm{i}\ket{3}).
\eea

 These three measurements in dimension $4$ are incompatible for $\nu > 0.692$. As discussed before, disjoint-convex-mixing makes them compatible and they continue to remain so till $\nu = 0.705$. So, for $0.692 < \nu \le 0.705$, the three measurements are incompatible, but disjoint-convex-mixing of any two of them makes them compatible. For $\nu > 0.705$, the measurements are incompatible and robust against disjoint-convex-mixing across any partition that effectively brings down the number of measurements to two. The results for disjoint-convex-mixing of measurements in dimensions $3$ and $4$ are summarized in TABLE \ref{tab:table-name2}. 

\begin{table}[h!]
\begin{tabular}{ c| c| c }
 & \textbf{In dimension $3$} & \textbf{In dimension $4$}  \\
 \hline 
 \textbf{$3$-incompatible}  & 0.537 & 0.692 \\  
  \hline
 \textbf{$2$-incompatible} & 0.764 & 0.705 \\
 \hline
\end{tabular}
\caption{\label{tab:table-name2} Critical values of robustness for MUBs w.r.t. disjoint-convex-mixing in $\mathbb{C}^3$ and $\mathbb{C}^4$, above which the measurements are incompatible.}
\end{table}
\subsection{Robustness of incompatibility under disjoint-convex-mixing of measurements followed by coarse-graining of outcomes } In the previous sub-sections, we analysed the variation of the robustness parameter for different layers of incompatibility under elementary classical operations performed either on the inputs, or on the outputs.  Below, we investigate this direction when the two operations are performed together.

\textbf{Dimension $3$.} Consider the three measurements given by Eqs.\eqref{measurement1}, and \eqref{measurement3}, with the corresponding projectors given by Eqs.\eqref{measurement} and \eqref{phi} respectively. Now, all possible convex mixing of any two measurements makes them $2$-incompatible, and hence fully incompatible (according to definition \ref{def_4}) under disjoint-convex-mixing  for $\nu > 0.764$ (as observed from TABLE \ref{tab:table-name2}). This is followed by the coarse-graining of the outcomes of these two measurements. All possible coarse-graining of the outcomes results in the measurements being $2$-incompatible (according to definition \ref{def_2}) for $\nu > 0.894$. So, for $0.764 < \nu \le 0.894$, post-processing reduces the degree of incompatibility of the initial pre-processed incompatible measurements. The results are summarized in TABLE \ref{tab:table-name3}.
\begin{table}[h!]
\begin{tabular}{ c| c|}
$\mathbf{d=3}$  & \textbf{DCM followed by CG}  \\
 \hline 
 \textbf{$3$-incompatible}   & 0.764 \\  
  \hline
 \textbf{$2$-incompatible}  & 0.894 \\
 \hline
\end{tabular}
\caption{\label{tab:table-name3} Critical values of robustness for MUBs w.r.t. disjoint-convex-mixing (DCM) of inputs followed by coarse-graining (CG) of outcomes in $\mathbb{C}^3$, above which the measurements are incompatible.}
\end{table}

\textbf{Dimension $4$.} The three measurements given by Eqs.\eqref{measurement2}, \eqref{description2}, and \eqref{phi'} are incompatible for $\nu > 0.692$. These measurements become compatible after all possible disjoint-convex-mixing resulting in two measurements for $\nu \le 0.705$, as shown in TABLE \ref{tab:table-name2}. Post-processing of the outcomes further reduces the degree of incompatibility. These measurements are $3$-incompatible, and $2$-incompatible for $\nu > 0.774$ and $\nu > 0.813$ respectively under all possible coarse-graining of the outcomes. For $0.705 < \nu \le 0.774$, the measurements become compatible by relabelling any one outcome, i.e., they are $4$-incompatible, but not $3$-incompatible (following definition \ref{def_2}). Further post-processing increases the compatibility bound to $0.813$. Note that for an initial set of three measurements, the concept of $4$-incompatiblity under disjoint-convex-mixing is not applicable (N.A.). The results are summarized in TABLE \ref{tab:table-name4}.

These examples highlight the importance of considering classical operations on both inputs and outputs. The degree of incompatibility gets reduced by disjoint-convex-mixing and subsequently decreases further by coarse-graining of outcomes. Hence, the coarse-grained $2$-incompatible measurements exhibit the strongest form of incompatibility when the elementary classical operations are considered on both inputs and outputs. 
\begin{table}[h!]
\begin{tabular}{ c| c|}
$\mathbf{d=4}$  & \textbf{DCM followed by CG}  \\
 \hline 
 \textbf{$4$-incompatible}  & 0.705 \\  
 \hline
 \textbf{$3$-incompatible}   & 0.774 \\  
  \hline
 \textbf{$2$-incompatible}  & 0.813 \\
 \hline
\end{tabular}
\caption{\label{tab:table-name4} Critical values of robustness for MUBs w.r.t. disjoint-convex-mixing (DCM) of inputs followed by coarse-graining (CG) of outcomes in $\mathbb{C}^4$, above which the measurements are incompatible.}
\end{table}

  
%


\section{Operational witnesses of incompatibility} \label{sec6}

In this section, we explore how different levels of incompatibility can be witnessed through the input-output statistics derived from characterized devices.

\subsection{Coarse-graining of outcomes} 

As we mentioned earlier, measurements that are fully incompatible w.r.t. coarse-graining show stronger incompatibility compared to the measurements that are not fully incompatible w.r.t. coarse-graining. To operationally certify whether a set of measurements is fully incompatible w.r.t. coarse-graining is important from the perspective of determining the practical utility of
such a set for revealing phenomena such as Bell inequality, steering and
contextuality \cite{Fine1982,PRL103230402,PhysRevLett.113.160403,PhysRevLett.113.160402,tp1,PhysRevLett.115.230402,tp2,LIANG20111,Xu2019,sg_prl}, as well as for checking the proficiency of such a set for
information processing tasks \cite{Carmeli2020,SahaIncom}. Certification means that we need to infer the incompatibility of the measurements from the input-output statistics of the measurement device without knowing its internal functioning. Below we
study the two classes of certification separately.

\subsubsection*{Device-independent witness }
Violation of Bell inequality provides device-independent witness for incompatible measurements. Here, one neither requires any prior knowledge of the internal functioning of the measurement device nor any idea of the dimension of the system on which the measurements act.  However, not all incompatible measurements yield violations of Bell inequalities \cite{Bene_2018,PhysRevA.97.012129}. Nonetheless, we have the following results.

\begin{result}
    Full-incompatibility w.r.t. coarse-graining of any two measurements can always be witnessed in a device-independent way.
\end{result}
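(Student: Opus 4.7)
The plan is to reduce the problem to the binary-outcome case via Observation~\ref{ob1}, and then invoke the result of Wolf, Perez-Garcia and Fernandez \cite{wolf_prl} that any pair of incompatible binary POVMs admits a Bell scenario in which the CHSH inequality is violated.

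First, I would assume $M=\{M_i\}$ and $N=\{N_j\}$ are fully incompatible w.r.t.~coarse-graining. By Observation~\ref{ob1}, this is equivalent to 2-incompatibility, meaning that for every pair of proper subsets $(\mathcal{M}_k,\mathcal{N}_l)$ of their respective outcome sets, the binary coarse-grained POVMs
\[
\Big\{\textstyle\sum_{i\in\mathcal{M}_k}M_i,\, \I-\sum_{i\in\mathcal{M}_k}M_i\Big\} \text{ and } \Big\{\textstyle\sum_{j\in\mathcal{N}_l}N_j,\, \I-\sum_{j\in\mathcal{N}_l}N_j\Big\}
\]
form an incompatible pair. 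I would then apply \cite{wolf_prl} to each such pair, producing for every $(\mathcal{M}_k,\mathcal{N}_l)$ a shared state and a choice of Bob-side measurements that generate a CHSH violation whenever Alice uses her black-box device and then applies the corresponding classical coarse-graining to her outputs.

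Next, I would construct the device-independent witnessing protocol as follows. Alice's input $x\in\{0,1\}$ selects either $M$ or $N$ on the black box; the recorded outcome is passed through the classical post-processing $\mathcal{M}_k$ or $\mathcal{N}_l$ to produce a binary label, which enters the CHSH estimator jointly with Bob's outcome. Because the coarse-graining is a purely classical relabelling performed on the raw output, no assumption on the internal workings of the device is invoked, so device-independence is preserved. Since the collection of non-trivial 2-outcome coarse-grainings is finite, enumerating over all such choices and observing CHSH violation in every case certifies, via Observation~\ref{ob1}, that $M$ and $N$ are fully incompatible w.r.t.~coarse-graining.

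The main obstacle one might anticipate is whether the theorem of \cite{wolf_prl} applies uniformly across all the binary POVMs arising from coarse-grainings of arbitrary original measurements. This turns out not to be an obstacle: the theorem is stated for arbitrary pairs of incompatible binary POVMs on any underlying dimension, and the coarse-grained POVMs constructed above are exactly of this form, so no further structural assumption on the original measurements enters. The combinatorial enumeration of coarse-grainings, together with the Wolf et al.~theorem, thus yields a complete device-independent certification.
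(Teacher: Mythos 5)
Your proof is correct and follows essentially the same route as the paper: reduce full incompatibility to 2-incompatibility via Observation \ref{ob1}, then invoke the Wolf--Perez-Garcia--Fernandez result that any pair of incompatible binary-outcome measurements violates a Bell-CHSH inequality for a suitable shared state and suitable measurements on the other wing. The extra detail you give about enumerating the finitely many nontrivial binary coarse-grainings and applying the classical relabelling to the black box's raw outputs is a sound elaboration of the same argument, not a different approach.
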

\begin{proof}
In \cite{PRL103230402}, it has been proven that any pair of binary-outcome incompatible measurements violate at least one Bell-CHSH inequality by suitably choosing a shared entangled state between the parties and suitably choosing measurements on the other subsystem. On the other hand, from Observation \ref{ob1}, we know if two measurements are full-incompatible w.r.t. coarse-graining, then they must be two-incompatible w.r.t. coarse-graining.  Combining these two facts, we can conclude this observation. 
\end{proof}

\subsubsection*{Device-independent witness from a single Bell experiment}

In the above-mentioned approach, different Bell experiments are probed for different coarse-graining. It is interesting to explore whether different levels of incompatibility w.r.t. coarse-graining can be witnessed universally through a single Bell experiment in a device-independent manner.
In this case, the objective is to consider different coarse-grained statistics of a Bell experiment and check whether those statistics have local explanations or not. 

To understand how this technique works, we consider a well-known example of two three-outcome rank-one projective measurements in $\mathbb{C}^{3}$ which give maximum violation of the Collins-Gisin-Linden-Massar-Popescu (CGLMP) inequality \cite{cglmp}, a well-studied 
\cite{prl.100.120406, physreva.99.022305, jphysa.55.384011}  Bell-type inequality.  The projective  measurements of Alice are: $A_a\equiv\{|\xi\ra_{A,a}\},$ where \be \label{cglm_measurement}
|\xi\ra_{A,a} =\frac{1}{\sqrt{3}} \sum_{j=0}^2 \text{exp}\bigg(\mathbbm{i}\frac{2\pi}{3}j(\xi+\alpha_{a})\bigg)|j\ra_{A},
\ee with $a\in \{1,2\}$ corresponding to two different measurement settings of Alice, and $\xi \in \{0,1,2\},\alpha_1=0, \alpha_2=\frac{1}{2}$. The projective measurements of Bob are:  $B_b\equiv\{|\eta\ra_{B,b}\},$ where \be 
|\eta\ra_{B,b} =\frac{1}{\sqrt{3}} \sum_{j=0}^2 \text{exp}\bigg(\mathbbm{i}\frac{2\pi}{3}j(-\eta+\beta_{b})\bigg)|j\ra_{B}, 
\ee with, $b \in \{1,2\}$ corresponding to two different measurement settings of Bob, and $\eta\in \{0,1,2\}, \beta_1=\frac{1}{4}, \beta_2=-\frac{1}{4}.$ 

 Specifically, here we investigate the different layers of incompatibility of the measurements given in (\ref{cglm_measurement}) under coarse-graining. 
The incompatibility of the measurements of Eq.\eqref{cglm_measurement} before coarse-graining is guaranteed by the CGLMP-inequality violation. Now we give a theoretical analysis to investigate the incompatibility status of the measurements of Alice given in \eqref{cglm_measurement} after all possible coarse-grainings. The results are summarized in TABLE \,\,\ref{table3}.
 
 \textit{Theoretical explanation.--- }Out of three outcomes if any of the two outcomes are coarse-grained it becomes a (2,2,2,3) scenario, i.e., two inputs for Alice and two inputs for Bob and each input of Alice has two outcomes and each input of Bob has three outcomes. 
   For this scenario, Collins and Gisin have shown that there are a total of 72 CH-facet inequalities \cite{cg_04}. The CH-inequalities \cite{fine_prl} are of the form: 


\be \label{ch_facet}
-1\le S \le 1 ,
\ee
with CH-functional \bea \label{ch_functional}
S= &P(00|A_{1},B_{1})+P(00|A_{1},B_{2})+P(00|A_{2},B_{2})  \nonumber 
\\
    &-P(00|A_{2},B_{1})-P(0|A_1)-P(0|B_2).
\eea

The other CH-inequalities are obtained by (1) interchanging $A_1$ with $A_2$, (2) interchanging $B_1$ with $B_2$, and (3) interchanging both $A_1$ with $A_2$ and $B_1$ with $B_2$. Consider the scenario where we coarse-grain (0,1) outcomes for both the inputs of Alice. Lets make the following relabeling $(0,1)\equiv \overline{0}$ and $ 2 \equiv \overline{1} $ for the outcomes of $A_1$ and $A_2$ and also consider the clubbing of $(0,1) \equiv \overline{0}$ outcomes both for $B_1$ and $B_2.$ Under this relabelling Eq.(\ref{ch_facet}) takes the form: 
    \bea\label{ch_1}
    -1 \le & P(\overline{0}\,\, \overline{0}|A_{1},B_{1})+P(\overline{0}\,\,\overline{0}|A_{1},B_{2})+P(\overline{0}\,\,\overline{0}|A_{2},B_{2}) \nonumber \\
    &-P(\overline{0}\,\,\overline{0}|A_{2},B_{1})-P(\overline{0}|A_1)-P(\overline{0}|B_2) \le 0, \eea

  where, \bea
  P(\overline{0}\,\, \overline{0}|A_{i},B_{j}) = P(00|A_{i},B_{j})+P(01|A_{i},B_{j})\nonumber\\
  +P(10|A_{i},B_{j})+P(11|A_{i},B_{j}), 
  \eea
  \bea
&P(\overline{0}|A_{i})=P(0|A_{i})+P(1|A_{i}),\nonumber \\
&P(\overline{0}|B_{j})=P(0|B_{j})+P(1|B_{j}), \,\, i,j \in \{1,2\}.
\eea

   Similarly, there are eight other possible clubbings for Bob's measurement outcomes and for each clubbing, we have facet inequalities similar to Eq.(\ref{ch_1}). 
   
   One can check that when there are the same coarse-graining of outcomes for both measurement inputs of Alice, we get a CH-inequality violation. Thus, under these coarse-grainings, the two measurements of Alice remain incompatible. When there are different coarse-grainings, for some cases we get CH-violation, but for other cases, we do not get CH-violation. CH-violation under a particular coarse-graining signifies that the measurements are incompatible. However, when there is no CH-violation, we can not conclude anything regarding the incompatibility status.  This is depicted in TABLE \,\,\ref{table3}. \blk

 \begin{table}[h!]
     \centering
     \begin{tabular}{|c|c|c|c|c|}
     \hline
      \multicolumn{2}{|c|}{\textbf{Possible}}  &  &  & \\
       \multicolumn{2}{|c|}{\textbf{CG}}  &  &  &\\
        \multicolumn{2}{|c|}{\textbf{choices}}  &\textbf{$\Delta S_{T}$}  &\textbf{$\Delta S_{E}$}  & \textbf{Incompatible}\\
         \multicolumn{2}{|c|}{\textbf{of}}  &  &  &\\
         
      \cline{1-2}
     
     \textbf{$A_1$} & \textbf{$A_2$} &  &  &\\
     \hline
     \hline
     (0,1) & (0,1) & 0.126 & 0.122 & yes\\
     \hline
      (0,1) & (1,2) & 0 & 0 & ?\\
      \hline
       (0,1) & (0,2) & 0.126 & 0.122 & yes\\
       \hline
     (1,2) & (0,1) & 0.126 & 0.122 & yes  \\
 \hline
 (1,2) & (1,2) & 0.126 & 0.122 & yes \\
 \hline
 (1,2) &(0,2) &0  & 0 & ?\\
 \hline
 (0,2) &(0,1) &0  & 0 & ?\\
 \hline
 (0,2) &(1,2) &0.126  & 0.122 & yes\\
 \hline
 (0,2) &(0,2) &0.126 & 0.122 & yes \\
 \hline
 
     \end{tabular}
     \caption{\label{table3}  Incompatibility status for all possible coarse-graining (CG) choices of outcomes of Alice's measurements ($A_1$ and $A_2$) in the CGLMP scenario.  Here $\Delta S_{T}$ refers to the amount of CH-inequality violation obtained in the theory and $\Delta S_{E}$ refers to the CH-inequality violation calculated from the states and measurements \textit{viz.} (\ref{expt_state}) and (\ref{expt_meas}) realised in the experiment \cite{cglmp_expt_pra24}. Here  ``yes"  denotes that under the particular CG of $A_1$ and $A_2$, the
measurements can be witnessed to be incompatible as they give CH-inequality violation. On the other hand, ``$?$" denotes that we can not make any conclusion about their incompatibility under those particular CG as there is no violation of CH-inequality in those cases. }
 \end{table}

\textit{Experimental realization.--- }  Recently, an experimental study on CGLMP inequality has been performed  \cite{cglmp_expt_pra24} where the two parties share orbital angular momentum entanglement in a scenario of multiple settings and outcomes. In their experiment, for $d=3$ they prepared an orbital angular momentum entangled state which is in $\mathbb{C}^3\otimes\mathbb{C}^3$  of the form:  
\bea \label{expt_state}
\ket{\psi}&=&0.596 \ket{+1}_A\ket{-1}_B+0.529\ket{+2}_A\ket{-2}_B \nonumber \\  && + 0.604\ket{-1}_A\ket{+1}_B,
\eea 
with $\ket{+l}_A$ corresponds to the orbital angular momentum eigenstate of the signal photon (A) with orbital angular momentum $+l \hbar \,\, \text{and}\,\, \ket{-l}_B $ corresponds the orbital angular momentum eigenstate of the corresponding idler photon (B) with orbital angular momentum $-l\hbar.$ The von Neuman measurements of Alice and Bob are given by $A_{i}\equiv\{\ket{\Gamma^i_s}_A\bra{\Gamma^i_s}\} \,\, \text{and} \,\, B_{j}\equiv\{\ket{\Theta^j_t}_B\bra{\Theta^j_t}\}$ respectively, where 

\be \label{expt_meas}
\ket{\Gamma^i_s}_A=\frac{1}{\sqrt{3}}(\ket{+1}_A+ \omega^{s+\sigma_i}\ket{+2}_A+\omega^{2(s+\sigma_i)}\ket{-1}_A),
\ee
and 
\be
\ket{\Theta^j_t}_B=\frac{1}{\sqrt{3}}(\ket{-1}_B+ \omega^{-t-\gamma_j}\ket{-2}_B+\omega^{2(-t-\gamma_j)}\ket{+1}_B),
\ee 
with $i,j\in\{1,2\} \,\, \text{and} \,\, s,t \in \{0,1,2\}, \sigma_1 = \frac{1}{4},\sigma_2 = \frac{3}{4}, \gamma_1=\frac{1}{2} \,\, \text{and} \,\, \gamma_2=0, \,\, \omega = \text{exp}(\I\frac{2\pi}{3}). $
For $d=3$ the incompatibility status of the measurements (see  (\ref{expt_meas})) used in their experimental arrangements before and after coarse-graining is depicted in TABLE \ref{table3}. 


\subsubsection*{Semi-device-independent witness} 
In the semi-device-independent approach in prepare-and-measure experiments, we do not have any prior knowledge of the internal functioning of the measurement device; however, we assume the dimension of the system on which the measurements act. To witness different levels of incompatibility, here we focus on a class of communication tasks, namely, $(2,\od,d)-$ random access code tasks (RAC). In this task, the sender, Alice, gets two-dit string input message $(x_1,x_2)$ with $x_1,x_2\in [\od],$ and can communicate a $d-$dimensional system to the receiver, Bob, who wants to guess the value of any of the two dits, i.e. $x_1$ or $x_2$, randomly. The notation of $(2,\od,d)-$RAC is adopted from \cite{SahaIncom}.
If any two POVMs, each with $\od$ outcomes acting on $\mathbb{C}^{d}$  are jointly measurable,  the average success probability of this task, 
\be \label{eqn_cb}
P(2,\od,d) \le P_{CB}(2,\od,d) = \frac{1}{2} \left( 1+\frac{d}{\od^2} \right),
\ee where $P_{CB}(2,\od,d)$ is an upper bound on average success probability using two compatible measurements \cite{Carmeli2020,SahaIncom}. 
It turns out any two incompatible rank-one projective measurements can be witnessed through $(2,d,d)-$RAC \cite{saha2020}. We prove the following result for full incompatibility of two 3-outcome rank-one projective measurements w.r.t. coarse-graining.
 
\begin{theorem}\label{th_cg}
Two 3-outcome rank-one projective measurements, $M=\left\{|\phi_0\rangle,|\phi_1\rangle,|\phi_2\rangle\right\}$ and $ N=\left\{|\psi_0\rangle,|\psi_1\rangle,|\psi_2\rangle\right\}$, can be witnessed to be fully incompatible w.r.t. coarse-graining via RAC  if and only if $0<|\langle \phi_{i}|\psi_{j}\rangle|<\frac{4}{5},  \quad \forall \,\, i,j =0,1,2.$ 
\end{theorem}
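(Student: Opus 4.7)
The plan is to reduce ``witnessability via RAC'' to two explicit scalar inequalities on the moduli $c_{ij}:=|\langle \phi_i|\psi_j\rangle|$. By Theorem~\ref{theo1}, the only non-trivial coarse-grainings of a three-outcome rank-one projective measurement in $\mathbb{C}^3$ are the binary measurements $\{|\phi_i\rangle\langle \phi_i|,\,\I-|\phi_i\rangle\langle \phi_i|\}$. So certifying full incompatibility via RAC decomposes into (i) witnessing the original pair $(M,N)$ through a $(2,3,3)$-RAC task, and (ii) witnessing each of the nine coarse-grained binary pairs through a $(2,2,3)$-RAC task, with classical bounds $P_{CB}(2,3,3)=2/3$ and $P_{CB}(2,2,3)=7/8$ respectively.

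For task (i), the optimal preparation $\rho_{x_1 x_2}$ is the top eigenvector of $|\phi_{x_1}\rangle\langle \phi_{x_1}|+|\psi_{x_2}\rangle\langle \psi_{x_2}|$ with eigenvalue $1+c_{x_1 x_2}$, yielding $P(2,3,3)=\tfrac{1}{2}+\tfrac{1}{18}\sum_{i,j}c_{ij}$, which beats $2/3$ iff $\sum_{i,j}c_{ij}>3$. Since $(\langle \phi_i|\psi_j\rangle)_{ij}$ is unitary, every row has $\ell^2$-norm one, so $(\sum_j c_{ij})^2 \geq \sum_j c_{ij}^2 = 1$ with strict inequality whenever more than one entry in the row is nonzero; consequently strict positivity of all $c_{ij}$ already forces $\sum_{i,j}c_{ij}>3$ and hence yields the $(2,3,3)$-witness of $(M,N)$ for free.

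For task (ii) with a fixed pair $(i,j)$, write $c=c_{ij}$ and $s=\sqrt{1-c^2}$. Diagonalising the four sum operators on $\mathbb{C}^3$ corresponding to the four Alice-inputs, one finds their maximum eigenvalues are $2,\,1+s,\,1+s,\,1+c$; the first equals $2$ rather than $1+c$ because $|\phi_i\rangle\langle \phi_i|+|\psi_j\rangle\langle \psi_j|$ has a one-dimensional kernel in $\mathbb{C}^3$. This gives $P(2,2,3)=(5+2s+c)/8$, which strictly exceeds $7/8$ iff $2\sqrt{1-c^2}>2-c$; squaring (both sides non-negative for $c\in[0,1]$) and simplifying yields $c(4-5c)>0$, equivalent to $0<c<4/5$.

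Assembling the two directions completes the argument. Sufficiency: if $0<c_{ij}<4/5$ for every pair, then (ii) certifies each coarse-grained binary pair and the row-wise bound from (i) certifies $(M,N)$, so RAC witnesses full incompatibility. Necessity: RAC-witnessability requires each coarse-grained pair to beat $7/8$, forcing $c_{ij}<4/5$, while Theorem~\ref{theo1} forces $c_{ij}>0$. The trickiest step I anticipate is the 3D eigenvalue bookkeeping in (ii): the one-dimensional subspace orthogonal to $\mathrm{span}\{|\phi_i\rangle,|\psi_j\rangle\}$ must be handled correctly, as it is precisely this extra direction that produces the threshold $c=4/5$ rather than a trivially weaker bound.
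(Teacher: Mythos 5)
Your proof is correct and follows essentially the same route as the paper: reduce to the nine binary coarse-grained pairs, witness each via a $(2,2,3)$-RAC with classical bound $7/8$, and compute the four operator norms $1+c,\ 1+s,\ 1+s,\ 2$ (with $s=\sqrt{1-c^2}$) so that the advantage condition $c+2s>2$ squares to $0<c<4/5$. In fact your eigenvalue bookkeeping is the correct one: the paper's stated value $\|A_2\|=2+|\langle 0|\psi_1\rangle|$ is a sign slip (its matrix $\Gamma$ corresponds to $\I+|0\rangle\langle 0|+|\psi_1\rangle\langle\psi_1|$ rather than $A_2=\I+|0\rangle\langle 0|-|\psi_1\rangle\langle\psi_1|$, whose norm is $1+\sqrt{1-|\langle 0|\psi_1\rangle|^2}$), and only with your corrected values does the sum condition actually simplify to the stated threshold $4/5$. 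Your extra observation that the un-coarse-grained pair is automatically witnessed by the $(2,3,3)$-RAC whenever all overlaps are nonzero (via the row-normalization of the overlap matrix) is a small completeness point that the paper leaves implicit.
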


\begin{proof}

Without loss of generality, any pair of three outcome rank-one projective measurements can be written  up to unitary freedom as,
\bea 
& M=\left\{|0\rangle\langle 0|,|1\rangle\langle 1|,|2\rangle\langle 2|\right\}\equiv \left\{|\phi_{i}\rangle \langle \phi_{i}| \right\},  \nonumber   \\
& N=\left\{|\psi_{1}\rangle\langle \psi_{1}|,|\psi_{2}\rangle\langle \psi_{2}|,|\psi_{3}\rangle\langle \psi_{3}|\right\} ,
\eea 
where  $|\psi_{j}\rangle=\alpha_{j}|0\rangle + \beta_{j}|1\rangle + \gamma_{j}|2\rangle$.

Let us first consider the coarse-graining of the second and third outcomes for both measurements. So, the new measurement pair becomes: 
\bea
 &M^{(2,3)}=\big\{|0\rangle\langle 0|,|1\rangle\langle 1|+|2\rangle\langle 2|\big\}, \\\nonumber
 &N^{(2,3)}=\big\{|\psi_{1}\rangle\langle \psi_{1}|,|\psi_{2}\rangle\langle \psi_{2}|+|\psi_{3}\rangle\langle \psi_{3}|\big\}.
\eea
This pair of measurements appears in $(2,2,3)-$RAC and the compatibility bound in this case is, 
\be 
P_{CB}(2,2,3)=\frac{1}{2}\left(1+\frac{3}{2^2} \right)=\frac{7}{8}  
\ee  
by Eq.\eqref{eqn_cb} and this bound is tight. $ M^{(2,3)} \,\, \text{and} \,\, N^{(2,3)}$  will give advantage in $(2,2,3)-$RAC if
\be \label{eq_cg}
\sum_{x,y=0}^{1} ||M^{(2,3)}_{x}+N^{(2,3)}_{y}||=\sum_{i=1}^{4}||A_{i}||>7,
\ee
with 
\bea&A_{1}=M^{(2,3)}_{0}+N^{(2,3)}_{0}=\ket{0}\bra{0}+\ket{\psi_1}\bra{\psi_1}, \nonumber \\
&A_{2}=M^{(2,3)}_{0}+N^{(2,3)}_{1}= \I+\ket{0}\bra{0}-\ket{\psi_1}\bra{\psi_1}, \nonumber \\
&A_{3}=M^{(2,3)}_{1}+N^{(2,3)}_{0}=\I-\ket{0}\bra{0}+\ket{\psi_1}\bra{\psi_1}, \nonumber \\
&A_{4}=M^{(2,3)}_{1}+N^{(2,3)}_{1}=2\I-\ket{0}\bra{0}-\ket{\psi_1}\bra{\psi_1},
\eea
and $|| . ||$ denotes the maximum eigenvalue of an operator.
Now, $\ket{\psi_1}$ can be written as follows
 \bea
   &\ket{\psi_1}=\la 0|\psi_1\ra \ket{0}+\la u|\psi_1\ra \ket{u} \,\,  \text{with} \\ \nonumber 
   & \la 0|u\ra=0 \,\, \text{and} \,\, |\la 0|\psi_1\ra|^2+|\la u|\psi_1\ra|^2=1,
   \eea
   for some vector $\ket{u}.$
 So, $A_1$ can be written as a $(2\times 2)$ matrix in the basis $\{\ket{0},\ket{u}\},$  
 \begin{align}   A_1=\begin{pmatrix}
       1+|\la 0|\psi_1\ra|^2 & \la 0|\psi_1\ra \la \psi_1|u\ra \\
       \la u|\psi_1\ra \la \psi_1|0\ra & |\la u|\psi_1\ra|^2 
        \end{pmatrix},
     \end{align}
     the maximum eigenvalue of  $A_1 \,\, \text{is} \,\, 1+|\la 0|\psi_1\ra|.$ Thus, \be
     ||A_1||=1+|\la 0|\psi_1\ra|.
     \ee
Similarly, $A_2$ can be expressed in a block diagonal matrix in the ortho-normal basis $\{\ket{0},\ket{u},\ket{v}\},$  \begin{align}
    A_2=\begin{pmatrix}
        \Gamma & 0 \\
        0 & 1
    \end{pmatrix} \,\, \text{where} \,\, 
    \Gamma= \begin{pmatrix}
        2-|\la 0|\psi_1\ra|^2 & -\la 0|\psi_1\ra \la \psi_1|u\ra\\
        -\la u|\psi_1\ra \la \psi_1|0\ra  & 1-|\la u|\psi_1\ra|^2
    \end{pmatrix},
\end{align} 

and 

\be 
||A_2||=1+\sqrt{1-|\la0|\psi_1\ra|^2}. 
\ee

$A_3$ and $A_4$ can also be expressed in a block diagonal matrix in the basis $\{\ket{0},\ket{u},\ket{v}\},$  \begin{align}
    A_3=\begin{pmatrix}
        \Sigma & 0 \\
        0 & 1
    \end{pmatrix}, \ 
    \Sigma= \begin{pmatrix}
        |\la 0|\psi_1\ra|^2 & \la 0|\psi_1\ra \la \psi_1|u\ra\\
        \la u|\psi_1\ra \la \psi_1|0\ra  & 1+|\la u|\psi_1\ra|^2
    \end{pmatrix},
\end{align} 
and \be
||A_3||=1+\sqrt{1-|\la0|\psi_1\ra|^2}. 
\ee
 \begin{align}
    A_4=\begin{pmatrix}
        \Xi & 0 \\
        0 & 2
    \end{pmatrix} , \
    \Xi= \begin{pmatrix}
        1-|\la 0|\psi_1\ra|^2 & -\la 0|\psi_1\ra \la \psi_1|u\ra\\
        -\la u|\psi_1\ra \la \psi_1|0\ra  & 2-|\la u|\psi_1\ra|^2
    \end{pmatrix},
\end{align} 
and $||A_4||=2.$

Substituting the values of $||A_1||,\hdots, ||A_4||$ in Eq.\eqref{eq_cg} and after simplification we get, $|\la 0|\psi_{1}\ra|<\frac{4}{5}.$  This condition is obtained for a particular coarse-graining. Therefore, to be fully incompatible w.r.t. coarse-graining, $|\la \phi_i|\psi_{j}\ra|<\frac{4}{5} \,\, \forall \,\, i,j,$  should hold. 
\end{proof}


\subsection{Disjoint-convex-mixing of measurements} 
We now study the operational witness of incompatibility w.r.t. disjoint-convex-mixing of measurements.

\subsubsection*{Device-independent witness}  
Given a set of $n$ measurements, we can make $k$ partitions of it,
where $k \in $ $\{2,\cdots,n\}$. If the $k$ number of measurements obtained by the disjoint-convex-mixing of the measurements provides a violation of Bell inequalities for all possible disjoint-convex-mixing and permutations, then the measurements are witnessed to be $k-$incompatible w.r.t. disjoint-convex-mixing in a device-independent way.
Using the result that any two binary-outcome incompatible measurements violate the Bell-CHSH inequality \cite{PRL103230402}, we can witness 2-incompatibility w.r.t. disjoint-convex-mixing from any set of binary-outcome measurements.

\subsubsection*{Semi-device-independent witness} 

For three or more numbers of measurements, we can witness whether the measurements are fully-incompatible w.r.t. disjoint-convex-mixture in a semi-device independent manner by constructing suitable random access code tasks.
\begin{theorem} \label{th_2}
    Three noisy Pauli measurements of Eq. \eqref{noisy_pauli} with equal noise ($\nu = \nu_0 = \nu_1 = \nu_2 $) are witnessed to be fully incompatible w.r.t. disjoint-convex-mixing via RAC if and only if $\sqrt{2/3}<\nu\leqslant 1.$ 
\end{theorem}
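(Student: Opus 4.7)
The plan is to reduce the RAC-witnessing condition to the Bloch-vector inequality \eqref{c2} via the well-known equivalence between Busch's incompatibility criterion for binary-outcome qubit measurements and success-probability advantage in the $(2,2,2)$-RAC task, and then to invoke Theorem \ref{r4} to convert that inequality into the stated threshold $\nu>\sqrt{2/3}$.

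First, I would recall the standard RAC calculation in the qubit binary-outcome setting. For two unbiased qubit measurements with Bloch vectors $\vec{a},\vec{b}$, i.e., effects $(\I\pm \vec{a}\cdot\vec{\sigma})/2$ and $(\I\pm \vec{b}\cdot\vec{\sigma})/2$, a direct optimization over Alice's pure-state encodings shows that the maximum $(2,2,2)$-RAC success probability equals $\tfrac12 + \tfrac18\big(\|\vec{a}+\vec{b}\|+\|\vec{a}-\vec{b}\|\big)$. The compatibility bound from \eqref{eqn_cb} with $d=\od=2$ is $P_{CB}(2,2,2)=3/4$, so RAC advantage occurs if and only if $\|\vec{a}+\vec{b}\|+\|\vec{a}-\vec{b}\|>2$, which is precisely Busch's incompatibility criterion. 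I would state this as a short preliminary lemma.

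Next, by the definition of full incompatibility w.r.t. disjoint-convex-mixing, the three noisy Pauli measurements are certifiable via RAC iff, for every triple $(i,j,k)\in\{(0,1,2),(1,2,0),(2,0,1)\}$ and every $q\in[0,1]$, the pair consisting of the measurement with Bloch vector $\vec{n}_i$ and the mixed measurement with Bloch vector $q\vec{n}_j+(1-q)\vec{n}_k$ yields $(2,2,2)$-RAC advantage. By the preliminary lemma, this is equivalent to the family of inequalities \eqref{c2}. Thus ``certifiable via RAC'' is equivalent to ``fully incompatible w.r.t. disjoint-convex-mixing'' for this class of measurements.

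Finally, I would apply Theorem \ref{r4} with $\nu_0=\nu_1=\nu_2=\nu$: the system \eqref{c2} reduces to \eqref{c3}, which collapses to the single inequality $\nu^2+\nu^2/2>1$, i.e., $\nu>\sqrt{2/3}$. Both directions then follow at once. The only nontrivial step, and the main (mild) obstacle, is the preliminary RAC lemma: one must verify that Alice's optimal encodings are the eigenvectors of $\pm(\vec{a}\cdot\vec{\sigma}\pm\vec{b}\cdot\vec{\sigma})$ so that the success probability is actually saturated, and also that no compatible pair can beat $3/4$ (the latter being already implicit in \eqref{eqn_cb}). Once this is in place, the rest of the argument is bookkeeping on top of Theorem \ref{r4}.
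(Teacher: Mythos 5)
Your proposal is correct and follows essentially the same route as the paper: both reduce the witness condition to the maximum $(2,2,2)$-RAC success probability $\frac{1}{8}\sum_{j,k}\|Q_j+R_k\|$, compare it with the compatibility bound $3/4$, and then invoke Theorem \ref{r4} to close the equivalence. Your packaging of the key computation as a general lemma (RAC advantage $\Leftrightarrow$ Busch's criterion for unbiased binary qubit effects) makes the chain of ``if and only if'' steps slightly more transparent than the paper's explicit evaluation for the noisy Paulis, and it also automatically covers the outcome-permuted mixings, since $\|\vec{a}+\vec{b}\|+\|\vec{a}-\vec{b}\|$ is invariant under $\vec{b}\to-\vec{b}$.
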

\begin{proof}
Consider the measurements $\{Q^{i}\}$ with $i\in \{1,2\}$ for two possible permutations of outcomes, to be formed by the disjoint-convex-mixing of $\{M\}$ and $\{N\}$  as  
\bea \label{Q1Q2}
&Q^1= \{p\,\, M_0+(1-p)\,\,N_0,\,\,p M_1+(1-p)N_1\},  \nonumber \\
&Q^2=\{p\,\,M_0+(1-p)N_1,p\,\,M_1+(1-p)N_0\}. 
\eea
To witness the incompatibility status of $(Q^i,R)$, one can construct a $(2,2,2)-$RAC task. Now, when Bob has to guess Alice's 1st bit, he performs the measurement $\{Q^i\}$ defined by \eqref{Q1Q2}, which is realized by the disjoint-convex-mixing of measurements $\{M\}$ and $\{N\}$. 
For the second bit, he performs the measurement $\{R\}$. The maximum average success probability $P(2,2,2)$ that can be obtained in the RAC task is:
 \bea \label{rac_cm}
P(2,2,2)&=&\frac{1}{8}\sum_{j,k=0}^1||Q^{i}_{j}+R_{k}|| \nonumber \\
&=& \frac{1}{4}(2+\sqrt{2\nu^2-2p\nu^2+2p^2\nu^2}).
\eea
It can be shown that for all possible disjoint-convex-mixing, the maximum average success probability is the same as Eq. (\ref{rac_cm}). Now, to be fully incompatible w.r.t. disjoint-convex-mixing $ P(2,2,2)>3/4, \,\, \text{which implies} \,\, \nu^2(p^2-p+1)>\frac{1}{2}$ (by Eq.\eqref{rac_cm}).  Since the minimum value of $(p^2-p+1)$ is $\frac{3}{4},$ so for $\nu>\sqrt{2/3}$ the measurements are fully incompatible w.r.t. disjoint-convex-mixing.

In Observation \ref{r4}, we have found that this kind of three-qubit measurements with noise become fully incompatible w.r.t. disjoint-convex-mixing for $\nu >\sqrt{2/3}$. So, we can conclude that these three measurements are fully incompatible w.r.t. disjoint-convex-mixing if and only if quantum advantage is obtained in RAC. 
\end{proof}

Let us consider another example of three rank-one projective measurements acting on $\mathbb{C}^{3}$: 
\bea 
&X=\{|0_x\rangle\langle0_x|,|1_x\rangle\langle 1_x|,|2_x\rangle\langle 2_x|\}, \nonumber \\ 
&Y=\{|0_y\rangle\langle0_y|,|1_y\rangle\langle 1_y|,|2_y\rangle\langle 2_y|\},\nonumber \\  &Z=\{|0_z\rangle\langle0_z|,|1_z\rangle\langle 1_z|,|2_z\rangle\langle 2_z|\},
\eea
where, \bea
\ket{0_x}&=&\frac{1}{2}\ket{0}+\frac{1}{\sqrt{2}}\ket{1}+\frac{1}{2}\ket{2},\nonumber\\
\ket{1_x}&=&-\frac{1}{\sqrt{2}}\ket{0}+\frac{1}{\sqrt{2}}\ket{2},\nonumber\\
\ket{2_x}&=&\frac{1}{2}\ket{0}-\frac{1}{\sqrt{2}}\ket{1}+\frac{1}{2}\ket{2},\nonumber\\
\ket{0_y}&=&-\frac{\mathbbm{i}}{2}\ket{0}+\frac{1}{\sqrt{2}}\ket{1}+\frac{\mathbbm{i}}{2}\ket{2},\nonumber\\
\ket{1_y}&=&\frac{\mathbbm{i}}{\sqrt{2}}\ket{0}+\frac{\mathbbm{i}}{\sqrt{2}}\ket{2},\nonumber\\
\ket{2_y}&=&-\frac{\mathbbm{i}}{2}\ket{0}-\frac{1}{\sqrt{2}}\ket{1}+\frac{\mathbbm{i}}{2}\ket{2},\nonumber\\
\ket{0_z}&\equiv&\ket{0}, \ket{1_z}\equiv\ket{1},\ket{2_z}\equiv\ket{2}.
\eea

We take the following  disjoint-convex-mixing of the measurements $X$ and $Y$: $A = \{p|0_x\rangle\langle0_x| + (1-p) |0_y\rangle\langle0_y|, p|1_x\rangle\langle 1_x|+(1-p)|1_y\rangle\langle 1_y|, p|2_x\rangle\langle 2_x|+(1-p)|2_y\rangle\langle 2_y|\}$ with $0\leq p \leq 1$. We now consider the $(2,3,3)$ RAC game involving the two measurements $A$ and $Z$. The maximum average success probability $P(2,3,3)$ for this RAC task is given by,
 \be
P(2,3,3)= \frac{1}{18}  \sum_{i,j=0}^2||A_{i}+Z_{j}||,
\ee
which turns out to be greater than $\frac{2}{3}$ i.e. $P_{CB}(2,3,3)$ for all $p\in [0,1]$. Hence, the measurements $A$ and $Z$
are incompatible. 

Next, consider the following measurement (taking an arbitrary disjoint-convex-mixing of $X$ and $Y$) $A_{i,j,k,l,m,n} = \{p|i_x\rangle\langle i_x| + (1-p) |j_y\rangle\langle j_y|, p|k_x\rangle\langle k_x|+(1-p)|l_y\rangle\langle l_y|, p|m_x\rangle\langle m_x|+(1-p)|n_y\rangle\langle n_y|\}$ with $0\leq p \leq 1$, $i,j,k,l,m,n \in \{0,1,2\}$, $i \neq k$, $k \neq m$, $m \neq i$ and $j\neq l$, $l \neq n$, $n \neq j$. Note that the aforementioned measurement $A$ is denoted by $A_{0,0,1,1,2,2}$ following the present notation. Following a similar calculation, it can be shown that any such $A_{i,j,k,l,m,n}$ is incompatible with $Z$ for all $p\in [0,1]$.

Similarly, taking an arbitrary convex combination of $X$ and $Z$ (or, $Y$ and $Z$), it can be shown that the new measurement is incompatible with $Y$ (or, $X$) for all $p\in [0,1]$.


\textit{A generic description for the semi-device-independent witness of incompatibility.--- } Consider a measurement assemblage of $n$ measurements having $\od$ outcomes,  $\mathcal{M} = \{M_{z|x}\}_{z,x}$ where $x\in [n], z\in [\od]$, and the measurements act on $\mathbbm{C}^{d}$. We make $k$ partitions: $S=\{S_i\}_{i=0}^{k-1}$, $\bigcup_{i} S_i = \mathcal{M}, \,\text{and}\,  S_i \bigcap S_j = \phi, \,\, \forall \,\, i,j \,\, \text{with} \,\, i\neq j.$  Our purpose is to operationally witness the incompatibility of $k$ measurements produced from the disjoint-convex-mixing of $n$ measurements from each of the $k$ partitions. We can construct a $(k,\od,d)-$RAC where Alice has an $k$ dit input message, {\it viz}., $x=(x_0,x_1,\hdots,x_{k-1})$. Depending upon the message, she encodes it in a qudit and sends it to Bob, who on the other hand, gets input $y\in [k]$ and accordingly, he has to predict the value of the corresponding bit $x_y$. He performs the measurement, which is obtained by the disjoint-convex-mixing of the measurements from the partition $S_y$ and declares the outcome of the measurement. Now, if the success probability $P(k,\od,d)$ is greater than $P_{CB}(k,\od,d)$ \cite{SahaIncom} for all possible disjoint-convex-mixing and permutations of measurement outcomes, we can operationally witness $k-$incompatibility under disjoint-convex-mixing in a semi-device-independent way.




\section{Conclusions}\label{sec7}
Measurement incompatibility is a feature in quantum theory that a set of measurements cannot be performed jointly on arbitrary systems \cite{guhne2023colloquium}. It is one of the fundamental ingredients for non-classical correlations and merits of quantum information science. Incompatibility offers a complex structure with different layers as the number of measurements and the dimension on which the measurements act increases \cite{uola21,egelhaaf_pmBell,cv_incom}. Thus, understanding the different levels of incompatibility with respect to elementary classical operations is of paramount importance from both the foundational and practical perspectives.
 In this work, we have considered the two most general classical operations
 {\it viz.}, coarse-graining of different outcomes of measurements, and disjoint-convex-mixing of different measurements. 
 
Through our present analysis, we have investigated the different levels of incompatibility arising under the above classical operations. Since
 environmental effects are ubiquitous in practical scenarios, the tolerance thresholds for maintaining measurement incompatibility against noise under classical operations are investigated here. Furthermore, we have developed a method to operationally witness different levels of measurement incompatibility in the device-independent framework involving
 Bell-type experiments, and also in the semi-device-independent framework
 involving prepare-and-measure experiments.  
 
 Several examples have been
 provided to illustrate the efficacy of the operational witnesses proposed
 here. Our results are particularly useful for the purpose of comparing measurements in terms of their degree of incompatibility. For example, measurements that remain fully incompatible with respect to coarse-graining of outcomes (or disjoint-convex-mixing of measurements) show stronger incompatibility compared to those that are not fully incompatible,  thus facilitating the legitimate choice of
measurements in an information-processing task where a high degree of incompatibility is required. Our work allows for experimental implementation and one can infer more about these layers of incompatibility by employing these operations in the statistics of previously conducted experiments \cite{cglmp_expt_pra24}.  Further, our formalism can be applied 
to predict the threshold values of system parameters enabling observations of
incompatibility layers in future experiments.

Our present study motivates future work in several open directions.
The condition for full incompatibility for two projective measurements with respect to the coarse-graining of outcomes can be extended for more
projective measurements.  Similarly, the criterion for full incompatibility of projective measurements with respect to disjoint-convex-mixing
could be generalized for any set of measurements. It will be interesting to look for examples where the full incompatibility with respect to coarse-graining can be inferred in a device-independent way from single experimental statistics. Finally, studying measurement simulability \cite{measuremestSimulability_guerini} of a given set of measurements after classical pre- and post-processing on the given set is another area for further research.

\subsection*{Acknowledgements} AKD and ASM acknowledge
support from Project No. DST/ICPS/QuEST/2018/98
from the Department of Science and Technology, Govt.
of India. DS acknowledges support from STARS (STARS/STARS-2/2023-0809), Govt.
of India. DD acknowledges the Royal Society (United Kingdom) for the support through the Newton International Fellowship (NIF$\backslash$R1$\backslash$212007). DD also acknowledges financial support from EPSRC (Engineering \& Physical Sciences Research Council, United Kingdom) Grant Numbers EP/X009467/1 and EP/R029075/1 and STFC (Science and Technology Facilities Council, United Kingdom) Grant Numbers ST/W006227/1 and ST/Z510385/1.

\bibliography{ref} 

@book{Lang70,
  author    = {Serge Lang},
  title     = {Algebraic Number Theory},
  publisher = {Addison-Wesley},
  year      = {1970},
  address   = {Reading, Massachusetts}
}

@article{guhne2023colloquium,
  title={Colloquium: Incompatible measurements in quantum information science},
  author={G{\"u}hne, Otfried and Haapasalo, Erkka and Kraft, Tristan and Pellonp{\"a}{\"a}, Juha-Pekka and Uola, Roope},
  journal={Reviews of Modern Physics},
  volume={95},
  number={1},
  pages={011003},
  year={2023},
  publisher={APS},
url={https://journals.aps.org/rmp/abstract/10.1103/RevModPhys.95.011003}
}

@Article{mathCarmeli,
AUTHOR = {Carmeli, Claudio and Heinosaari, Teiko and Reitzner, Daniel and Schultz, Jussi and Toigo, Alessandro},
TITLE = {Quantum Incompatibility in Collective Measurements},
JOURNAL = {Mathematics},
VOLUME = {4},
YEAR = {2016},
NUMBER = {3},
ARTICLE-NUMBER = {54},
URL = {https://www.mdpi.com/2227-7390/4/3/54},
ISSN = {2227-7390},
ABSTRACT = {We study the compatibility (or joint measurability) of quantum observables in a setting where the experimenter has access to multiple copies of a given quantum system, rather than performing the experiments on each individual copy separately. We introduce the index of incompatibility as a quantifier of incompatibility in this multi-copy setting, as well as the notion of the compatibility stack representing various compatibility relations present in a given set of observables. We then prove a general structure theorem for multi-copy joint observables and use it to prove that all abstract compatibility stacks with three vertices have realizations in terms of quantum observables.},
DOI = {10.3390/math4030054}
}

@article{armin_prl15,
  title = {Quantum Random Access Codes Using Single $d$-Level Systems},
  author = {Tavakoli, Armin and Hameedi, Alley and Marques, Breno and Bourennane, Mohamed},
  journal = {Phys. Rev. Lett.},
  volume = {114},
  issue = {17},
  pages = {170502},
  numpages = {5},
  year = {2015},
  month = {Apr},
  publisher = {American Physical Society},
  doi = {10.1103/PhysRevLett.114.170502},
  url = {https://link.aps.org/doi/10.1103/PhysRevLett.114.170502}
}

@article{PRXQuantum_carlos,
  title = {General Method for Classicality Certification in the Prepare and Measure Scenario},
  author = {de Gois, Carlos and Moreno, George and Nery, Ranieri and Brito, Samura\'{\i} and Chaves, Rafael and Rabelo, Rafael},
  journal = {PRX Quantum},
  volume = {2},
  issue = {3},
  pages = {030311},
  numpages = {12},
  year = {2021},
  month = {Jul},
  publisher = {American Physical Society},
  doi = {10.1103/PRXQuantum.2.030311},
  url = {https://link.aps.org/doi/10.1103/PRXQuantum.2.030311}
}

@article{Heinosaari2008,
  author    = {Teiko Heinosaari and Daniel Reitzner and Peter Stano},
  title     = {Notes on Joint Measurability of Quantum Observables},
  journal   = {Foundations of Physics},
  year      = {2008},
  volume    = {38},
  number    = {12},
  pages     = {1133--1147},
  doi       = {10.1007/s10701-008-9256-7},
  url       = {https://doi.org/10.1007/s10701-008-9256-7},
  issn      = {1572-9516},
  abstract  = {For sharp quantum observables the following facts hold: (i) if we have a collection of sharp observables and each pair of them is jointly measurable, then they are jointly measurable all together; (ii) if two sharp observables are jointly measurable, then their joint observable is unique and it gives the greatest lower bound for the effects corresponding to the observables; (iii) if we have two sharp observables and their every possible two outcome partitionings are jointly measurable, then the observables themselves are jointly measurable. We show that, in general, these properties do not hold. Also some possible candidates which would accompany joint measurability and generalize these apparently useful properties are discussed.}
}

@article{haapasaloQip12,
    author = {Haapasalo, Erkka and Heinosaari, Teiko and Pellonpää, Juha-Pekka},
    title = {Quantum measurements on finite dimensional systems: relabeling and mixing},
    journal = {Quantum Information Processing},
    year = {2012},
    doi = {10.1007/s11128-011-0330-2},
    url = {https://doi.org/10.1007/s11128-011-0330-2}
}

@article{measuremestSimulability_guerini,
    author = {Guerini, Leonardo and Bavaresco, Jessica and Terra Cunha, Marcelo and Acín, Antonio},
    title = {Operational framework for quantum measurement simulability},
    journal = {Journal of Mathematical Physics},
    volume = {58},
    number = {9},
    pages = {092102},
    year = {2017},
    month = {09},
    abstract = {We introduce a framework for simulating quantum measurements based on classical processing of a set of accessible measurements. Well-known concepts such as joint measurability and projective simulability naturally emerge as particular cases of our framework, but our study also leads to novel results and questions. First, a generalisation of joint measurability is derived, which yields a hierarchy for the incompatibility of sets of measurements. A similar hierarchy is defined based on the number of outcomes necessary to perform a simulation of a given measurement. This general approach also allows us to identify connections between different kinds of simulability and, in particular, we characterise the qubit measurements that are projective-simulable in terms of joint measurability. Finally, we discuss how our framework can be interpreted in the context of resource theories.},
    issn = {0022-2488},
    doi = {10.1063/1.4994303},
    url = {https://doi.org/10.1063/1.4994303},
    
}

@article{prl17_oszmaniec,
  title = {Simulating Positive-Operator-Valued Measures with Projective Measurements},
  author = {Oszmaniec, Micha\l{} and Guerini, Leonardo and Wittek, Peter and Ac\'{\i}n, Antonio},
  journal = {Phys. Rev. Lett.},
  volume = {119},
  issue = {19},
  pages = {190501},
  numpages = {6},
  year = {2017},
  month = {Nov},
  publisher = {American Physical Society},
  doi = {10.1103/PhysRevLett.119.190501},
  url = {https://link.aps.org/doi/10.1103/PhysRevLett.119.190501}
}

@article{filippovPra18,
  title = {Simulability of observables in general probabilistic theories},
  author = {Filippov, Sergey N. and Heinosaari, Teiko and Lepp\"aj\"arvi, Leevi},
  journal = {Phys. Rev. A},
  volume = {97},
  issue = {6},
  pages = {062102},
  numpages = {19},
  year = {2018},
  month = {Jun},
  publisher = {American Physical Society},
  doi = {10.1103/PhysRevA.97.062102},
  url = {https://link.aps.org/doi/10.1103/PhysRevA.97.062102}
}

@article{busch86,
  title = {Unsharp reality and joint measurements for spin observables},
  author = {Busch, Paul},
  journal = {Phys. Rev. D},
  volume = {33},
  issue = {8},
  pages = {2253--2261},
  numpages = {0},
  year = {1986},
  month = {Apr},
  publisher = {American Physical Society},
  doi = {10.1103/PhysRevD.33.2253},
  url = {https://link.aps.org/doi/10.1103/PhysRevD.33.2253}
}

@misc{egelhaaf_pmBell,
      title={Certifying measurement incompatibility in prepare-and-measure and Bell scenarios}, 
      author={Sophie Egelhaaf and Jef Pauwels and Marco Túlio Quintino and Roope Uola},
      year={2024},
      eprint={2407.06787},
      archivePrefix={arXiv},
      primaryClass={quant-ph},
      url={https://arxiv.org/abs/2407.06787}, 
}

@article{cv_incom,
  title = {Verification of joint measurability using phase-space quasiprobability distributions},
  author = {Rahimi-Keshari, Saleh and Mehboudi, Mohammad and De Santis, Dario and Cavalcanti, Daniel and Ac\'{\i}n, Antonio},
  journal = {Phys. Rev. A},
  volume = {104},
  issue = {4},
  pages = {042212},
  numpages = {7},
  year = {2021},
  month = {Oct},
  publisher = {American Physical Society},
  doi = {10.1103/PhysRevA.104.042212},
  url = {https://link.aps.org/doi/10.1103/PhysRevA.104.042212}
}

@article{cglmp,
  title = {Bell Inequalities for Arbitrarily High-Dimensional Systems},
  author = {Collins, Daniel and Gisin, Nicolas and Linden, Noah and Massar, Serge and Popescu, Sandu},
  journal = {Phys. Rev. Lett.},
  volume = {88},
  issue = {4},
  pages = {040404},
  numpages = {4},
  year = {2002},
  month = {Jan},
  publisher = {American Physical Society},
  doi = {10.1103/PhysRevLett.88.040404},
  url = {https://link.aps.org/doi/10.1103/PhysRevLett.88.040404}
}

@article{cglmp_expt_pra24,
  title = {Experimental test of the Collins-Gisin-Linden-Massar-Popescu inequality for multisetting and multidimensional orbital angular momentum systems},
  author = {Zhang, Dongkai and Qiu, Xiaodong and Chen, Lixiang},
  journal = {Phys. Rev. A},
  volume = {110},
  issue = {1},
  pages = {012202},
  numpages = {10},
  year = {2024},
  month = {Jul},
  publisher = {American Physical Society},
  doi = {10.1103/PhysRevA.110.012202},
  url = {https://link.aps.org/doi/10.1103/PhysRevA.110.012202}
}

@article{cg_04,
doi = {10.1088/0305-4470/37/5/021},
url = {https://dx.doi.org/10.1088/0305-4470/37/5/021},
year = {2004},
month = {jan},
publisher = {},
volume = {37},
number = {5},
pages = {1775},
author = {Daniel Collins and  Nicolas Gisin},
title = {A relevant two qubit Bell inequality inequivalent to the CHSH inequality},
journal = {Journal of Physics A: Mathematical and General},
}

@Article{cg_uncertainty,
AUTHOR = {Toscano, Fabricio and Tasca, Daniel S. and Rudnicki, Lukasz and Walborn, Stephen P.},
TITLE = {Uncertainty Relations for Coarse-Grained Measurements: An Overview},
JOURNAL = {Entropy},
VOLUME = {20},
YEAR = {2018},
NUMBER = {6},
ARTICLE-NUMBER = {454},
URL = {https://www.mdpi.com/1099-4300/20/6/454},
PubMedID = {33265544},
ISSN = {1099-4300},
DOI = {10.3390/e20060454}
}

@article{prl07_brukner,
  title = {Classical World Arising out of Quantum Physics under the Restriction of Coarse-Grained Measurements},
  author = {Kofler, Johannes and Brukner, Caslav },
  journal = {Phys. Rev. Lett.},
  volume = {99},
  issue = {18},
  pages = {180403},
  numpages = {4},
  year = {2007},
  month = {Nov},
  publisher = {American Physical Society},
  doi = {10.1103/PhysRevLett.99.180403},
  url = {https://link.aps.org/doi/10.1103/PhysRevLett.99.180403}
}

@article{clQuCG_Jeong,
  title = {Coarsening Measurement References and the Quantum-to-Classical Transition},
  author = {Jeong, Hyunseok and Lim, Youngrong and Kim, M. S.},
  journal = {Phys. Rev. Lett.},
  volume = {112},
  issue = {1},
  pages = {010402},
  numpages = {5},
  year = {2014},
  month = {Jan},
  publisher = {American Physical Society},
  doi = {10.1103/PhysRevLett.112.010402},
  url = {https://link.aps.org/doi/10.1103/PhysRevLett.112.010402}
}

@article{rudnicki_UncertaintyCg,
  title = {Optimal uncertainty relations for extremely coarse-grained measurements},
  author = {Rudnicki, \L{}ukasz and Walborn, Stephen P. and Toscano, Fabricio},
  journal = {Phys. Rev. A},
  volume = {85},
  issue = {4},
  pages = {042115},
  numpages = {9},
  year = {2012},
  month = {Apr},
  publisher = {American Physical Society},
  doi = {10.1103/PhysRevA.85.042115},
  url = {https://link.aps.org/doi/10.1103/PhysRevA.85.042115}
}

@article{Bene_2018,
	doi = {10.1088/1367-2630/aa9ca3},
	url = {https://doi.org/10.1088/1367-2630/aa9ca3},
	year = 2018,
	month = {jan},
	publisher = {{IOP} Publishing},
	volume = {20},
	number = {1},
	pages = {013021},
	author = {Erika Bene and Tam{\'{a}}s V{\'{e}}rtesi},
	title = {Measurement incompatibility does not give rise to Bell violation in general},
	journal = {New Journal of Physics},
	abstract = {In the case of a pair of two-outcome measurements, incompatibility is equivalent to Bell nonlocality. Indeed, any pair of incompatible two-outcome measurements can violate the Clauser–Horne–Shimony–Holt Bell inequality, which has been proven by Wolf et al (2009 Phys. Rev. Lett. 103 230402). In the case of more than two measurements the equivalence between incompatibility and Bell nonlocality is still an open problem, though partial results have recently been obtained. Here we show that the equivalence breaks for a special choice of three measurements. In particular, we present a set of three incompatible two-outcome measurements, such that if Alice measures this set, independent of the set of measurements chosen by Bob and the state shared by them, the resulting statistics cannot violate any Bell inequality. On the other hand, complementing the above result, we exhibit a set of N measurements for any that is -wise compatible, nevertheless it gives rise to Bell violation.}
	}

@article{PhysRevA.97.012129,
  title = {Quantum measurement incompatibility does not imply Bell nonlocality},
  author = {Hirsch, Flavien and Quintino, Marco T\'ulio and Brunner, Nicolas},
  journal = {Phys. Rev. A},
  volume = {97},
  issue = {1},
  pages = {012129},
  numpages = {7},
  year = {2018},
  month = {Jan},
  publisher = {American Physical Society},
  doi = {10.1103/PhysRevA.97.012129},
  url = {https://link.aps.org/doi/10.1103/PhysRevA.97.012129}
}

@article{prl19_diIncom,
  title = {Device-Independent Tests of Structures of Measurement Incompatibility},
  author = {Quintino, Marco T\'ulio and Budroni, Costantino and Woodhead, Erik and Cabello, Ad\'an and Cavalcanti, Daniel},
  journal = {Phys. Rev. Lett.},
  volume = {123},
  issue = {18},
  pages = {180401},
  numpages = {7},
  year = {2019},
  month = {Oct},
  publisher = {American Physical Society},
  doi = {10.1103/PhysRevLett.123.180401},
  url = {https://link.aps.org/doi/10.1103/PhysRevLett.123.180401}
}

@article{uola21,
  title = {Quantum measurement incompatibility in subspaces},
  author = {Uola, Roope and Kraft, Tristan and Designolle, S\'ebastien and Miklin, Nikolai and Tavakoli, Armin and Pellonp\"a\"a, Juha-Pekka and G\"uhne, Otfried and Brunner, Nicolas},
  journal = {Phys. Rev. A},
  volume = {103},
  issue = {2},
  pages = {022203},
  numpages = {10},
  year = {2021},
  month = {Feb},
  publisher = {American Physical Society},
  doi = {10.1103/PhysRevA.103.022203},
  url = {https://link.aps.org/doi/10.1103/PhysRevA.103.022203}
}

@article{saha2020,
  title = {Operational foundations for complementarity and uncertainty relations},
  author = {Saha, Debashis and Oszmaniec, Micha\l{} and Czekaj, Lukasz and Horodecki, Micha\l{} and Horodecki, Ryszard},
  journal = {Phys. Rev. A},
  volume = {101},
  issue = {5},
  pages = {052104},
  numpages = {22},
  year = {2020},
  month = {May},
  publisher = {American Physical Society},
  doi = {10.1103/PhysRevA.101.052104},
  url = {https://link.aps.org/doi/10.1103/PhysRevA.101.052104}
}

@article{Carmeli2020,
	doi = {10.1209/0295-5075/130/50001},
	url = {https://doi.org/10.1209/0295-5075/130/50001},
	year = 2020,
	month = {jun},
	publisher = {{IOP} Publishing},
	volume = {130},
	number = {5},
	pages = {50001},
	author = {Claudio Carmeli and Teiko Heinosaari and Alessandro Toigo},
	title = {Quantum random access codes and incompatibility of measurements},
	journal = {{EPL} (Europhysics Letters)},
}

@article{wolf_prl,
  title = {Measurements Incompatible in Quantum Theory Cannot Be Measured Jointly in Any Other No-Signaling Theory},
  author = {Wolf, Michael M. and Perez-Garcia, David and Fernandez, Carlos},
  journal = {Phys. Rev. Lett.},
  volume = {103},
  issue = {23},
  pages = {230402},
  numpages = {4},
  year = {2009},
  month = {Dec},
  publisher = {American Physical Society},
  doi = {10.1103/PhysRevLett.103.230402},
  url = {https://link.aps.org/doi/10.1103/PhysRevLett.103.230402}
}

@article{heinosaari2010nd,
  title={Nondisturbing quantum measurements},
  author={Heinosaari, Teiko and Wolf, Michael M},
  journal={Journal of mathematical physics},
  volume={51},
  number={9},
  pages={092201},
  year={2010},
  publisher={American Institute of Physics},
  url = {https://doi.org/10.1063/1.3480658}
}

@article{tp1,
  title = {Fine-grained Einstein-Podolsky-Rosen--steering inequalities},
  author = {Pramanik, Tanumoy and Kaplan, Marc and Majumdar, A. S.},
  journal = {Phys. Rev. A},
  volume = {90},
  issue = {5},
  pages = {050305},
  numpages = {5},
  year = {2014},
  month = {Nov},
  publisher = {American Physical Society},
  doi = {10.1103/PhysRevA.90.050305},
  url = {https://link.aps.org/doi/10.1103/PhysRevA.90.050305}
}

@article{tp2,
  title = {Stronger steerability criterion for more uncertain continuous-variable systems},
  author = {Chowdhury, Priyanka and Pramanik, Tanumoy and Majumdar, A. S.},
  journal = {Phys. Rev. A},
  volume = {92},
  issue = {4},
  pages = {042317},
  numpages = {5},
  year = {2015},
  month = {Oct},
  publisher = {American Physical Society},
  doi = {10.1103/PhysRevA.92.042317},
  url = {https://link.aps.org/doi/10.1103/PhysRevA.92.042317}
}

@article{MDM17,
  title = {Tighter Einstein-Podolsky-Rosen steering inequality based on the sum-uncertainty relation},
  author = {Maity, Ananda G. and Datta, Shounak and Majumdar, A. S.},
  journal = {Phys. Rev. A},
  volume = {96},
  issue = {5},
  pages = {052326},
  numpages = {7},
  year = {2017},
  month = {Nov},
  publisher = {American Physical Society},
  doi = {10.1103/PhysRevA.96.052326},
  url = {https://link.aps.org/doi/10.1103/PhysRevA.96.052326}
}

@article{PRL103230402,
  title = {Measurements Incompatible in Quantum Theory Cannot Be Measured Jointly in Any Other No-Signaling Theory},
  author = {Wolf, Michael M. and Perez-Garcia, David and Fernandez, Carlos},
  journal = {Phys. Rev. Lett.},
  volume = {103},
  issue = {23},
  pages = {230402},
  numpages = {4},
  year = {2009},
  month = {Dec},
  publisher = {American Physical Society},
  doi = {10.1103/PhysRevLett.103.230402},
  url = {https://link.aps.org/doi/10.1103/PhysRevLett.103.230402}
}

@article{PhysRevLett.113.160403,
  title = {Joint Measurability of Generalized Measurements Implies Classicality},
  author = {Uola, Roope and Moroder, Tobias and G\"uhne, Otfried},
  journal = {Phys. Rev. Lett.},
  volume = {113},
  issue = {16},
  pages = {160403},
  numpages = {5},
  year = {2014},
  month = {Oct},
  publisher = {American Physical Society},
  doi = {10.1103/PhysRevLett.113.160403},
  url = {https://link.aps.org/doi/10.1103/PhysRevLett.113.160403}
}

@article{PhysRevLett.113.160402,
  title = {Joint Measurability, Einstein-Podolsky-Rosen Steering, and Bell Nonlocality},
  author = {Quintino, Marco T\'ulio and V\'ertesi, Tam\'as and Brunner, Nicolas},
  journal = {Phys. Rev. Lett.},
  volume = {113},
  issue = {16},
  pages = {160402},
  numpages = {5},
  year = {2014},
  month = {Oct},
  publisher = {American Physical Society},
  doi = {10.1103/PhysRevLett.113.160402},
  url = {https://link.aps.org/doi/10.1103/PhysRevLett.113.160402}
}

@article{PhysRevLett.115.230402,
  title = {One-to-One Mapping between Steering and Joint Measurability Problems},
  author = {Uola, Roope and Budroni, Costantino and G\"uhne, Otfried and Pellonp\"a\"a, Juha-Pekka},
  journal = {Phys. Rev. Lett.},
  volume = {115},
  issue = {23},
  pages = {230402},
  numpages = {5},
  year = {2015},
  month = {Dec},
  publisher = {American Physical Society},
  doi = {10.1103/PhysRevLett.115.230402},
  url = {https://link.aps.org/doi/10.1103/PhysRevLett.115.230402}
}

@article{PhysRevA.100.042117,
  title = {Leggett-Garg macrorealism and the quantum nondisturbance conditions},
  author = {Uola, Roope and Vitagliano, Giuseppe and Budroni, Costantino},
  journal = {Phys. Rev. A},
  volume = {100},
  issue = {4},
  pages = {042117},
  numpages = {17},
  year = {2019},
  month = {Oct},
  publisher = {American Physical Society},
  doi = {10.1103/PhysRevA.100.042117},
  url = {https://link.aps.org/doi/10.1103/PhysRevA.100.042117}
}

@article{Karthik:15,
author = {H. S. Karthik and J. Prabhu Tej and A. R. Usha Devi and A. K. Rajagopal},
journal = {J. Opt. Soc. Am. B},
keywords = {Classical and quantum physics; Fundamental tests ; Quantum information and processing ; Composite materials; Continuous variables; Projection systems; Tomography},
number = {4},
pages = {A34--A39},
publisher = {Optica Publishing Group},
title = {Joint measurability and temporal steering},
volume = {32},
month = {Apr},
year = {2015},
url = {http://opg.optica.org/josab/abstract.cfm?URI=josab-32-4-A34},
doi = {10.1364/JOSAB.32.000A34},
abstract = {Quintino et\&\#xA0;al. \[Phys. Rev. Lett.113, 160402 (2014)10.1103/PhysRevLett.113.160402PRLTAO0031-9007\] and Uola et\&\#xA0;al. \[Phys. Rev. Lett.113, 160403 (2014)10.1103/PhysRevLett.113.160403PRLTAO0031-9007\] have recently established an intrinsic relation between nonjoint measurability and Einstein\&\#x2013;Podolsky\&\#x2013;Rosen steering. They showed that a set of measurements are incompatible (i.e., not jointly measurable) if and only if they can be used for the demonstration of steering. In this paper, we prove the temporal analog of this result, viz., a set of measurements are incompatible if and only if they exhibit temporal steering in a single quantum system.},
}

@article{PhysRevA.91.062124,
  title = {Measurement incompatibility and channel steering},
  author = {Banik, Manik and Das, Subhadipa and Majumdar, A. S.},
  journal = {Phys. Rev. A},
  volume = {91},
  issue = {6},
  pages = {062124},
  numpages = {4},
  year = {2015},
  month = {Jun},
  publisher = {American Physical Society},
  doi = {10.1103/PhysRevA.91.062124},
  url = {https://link.aps.org/doi/10.1103/PhysRevA.91.062124}
}

@article{PhysRevA.97.032301,
  title = {Unified picture for spatial, temporal, and channel steering},
  author = {Uola, Roope and Lever, Fabiano and G\"uhne, Otfried and Pellonp\"a\"a, Juha-Pekka},
  journal = {Phys. Rev. A},
  volume = {97},
  issue = {3},
  pages = {032301},
  numpages = {6},
  year = {2018},
  month = {Mar},
  publisher = {American Physical Society},
  doi = {10.1103/PhysRevA.97.032301},
  url = {https://link.aps.org/doi/10.1103/PhysRevA.97.032301}
}

@article{PhysRevLett.122.130403,
  title = {All Sets of Incompatible Measurements give an Advantage in Quantum State Discrimination},
  author = {Skrzypczyk, Paul and \ifmmode \check{S}\else \v{S}\fi{}upi\ifmmode \acute{c}\else \'{c}\fi{}, Ivan and Cavalcanti, Daniel},
  journal = {Phys. Rev. Lett.},
  volume = {122},
  issue = {13},
  pages = {130403},
  numpages = {6},
  year = {2019},
  month = {Apr},
  publisher = {American Physical Society},
  doi = {10.1103/PhysRevLett.122.130403},
  url = {https://link.aps.org/doi/10.1103/PhysRevLett.122.130403}
}

@article{mur,
author = {Busch,Paul  and Lahti,Pekka  and Werner,Reinhard F. },
title = {Measurement uncertainty relations},
journal = {Journal of Mathematical Physics},
volume = {55},
number = {4},
pages = {042111},
year = {2014},
doi = {10.1063/1.4871444},

url = {https://doi.org/10.1063/1.4871444},

}

@article{LIANG20111,
title = {Specker’s parable of the overprotective seer: A road to contextuality, nonlocality and complementarity},
journal = {Physics Reports},
volume = {506},
number = {1},
pages = {1-39},
year = {2011},
issn = {0370-1573},
doi = {https://doi.org/10.1016/j.physrep.2011.05.001},
url = {https://www.sciencedirect.com/science/article/pii/S0370157311001517},
author = {Yeong-Cherng Liang and Robert W. Spekkens and Howard M. Wiseman},
keywords = {Quantum foundations, Contextuality, Nonlocality, Complementarity, Joint measurability, Bell inequalities},
abstract = {In 1960, the mathematician Ernst Specker described a simple example of nonclassical correlations, the counter-intuitive features of which he dramatized using a parable about a seer, who sets an impossible prediction task to his daughter’s suitors. We revisit this example here, using it as an entrée to three central concepts in quantum foundations: contextuality, Bell-nonlocality, and complementarity. Specifically, we show that Specker’s parable offers a narrative thread that weaves together a large number of results, including the following: the impossibility of measurement-noncontextual and outcome-deterministic ontological models of quantum theory (the 1967 Kochen–Specker theorem), in particular, the recent state-specific pentagram proof of Klyachko; the impossibility of Bell-local models of quantum theory (Bell’s theorem), especially the proofs by Mermin and Hardy and extensions thereof; the impossibility of a preparation-noncontextual ontological model of quantum theory; the existence of triples of positive operator valued measures (POVMs) that can be measured jointly pairwise but not triplewise. Along the way, several novel results are presented: a generalization of a theorem by Fine connecting the existence of a joint distribution over outcomes of counterfactual measurements to the existence of a measurement-noncontextual and outcome-deterministic ontological model; a generalization of Klyachko’s proof of the Kochen–Specker theorem from pentagrams to a family of star polygons; a proof of the Kochen–Specker theorem in the style of Hardy’s proof of Bell’s theorem (i.e., one that makes use of the failure of the transitivity of implication for counterfactual statements); a categorization of contextual and Bell-nonlocal correlations in terms of frustrated networks; a derivation of a new inequality testing preparation noncontextuality; some novel results on the joint measurability of POVMs and the question of whether these can be modeled noncontextually. Finally, we emphasize that Specker’s parable of the overprotective seer provides a novel type of foil to quantum theory, challenging us to explain why the particular sort of contextuality and complementarity embodied therein does not arise in a quantum world.}
}

@article{Xu2019,
  title = {Necessary and sufficient condition for contextuality from incompatibility},
  author = {Xu, Zhen-Peng and Cabello, Ad\'an},
  journal = {Phys. Rev. A},
  volume = {99},
  issue = {2},
  pages = {020103},
  numpages = {7},
  year = {2019},
  month = {Feb},
  publisher = {American Physical Society},
  doi = {10.1103/PhysRevA.99.020103},
  url = {https://link.aps.org/doi/10.1103/PhysRevA.99.020103}
}

@article{Fine1982,
author = {Fine,Arthur },
title = {Joint distributions, quantum correlations, and commuting observables},
journal = {Journal of Mathematical Physics},
volume = {23},
number = {7},
pages = {1306-1310},
year = {1982},
doi = {10.1063/1.525514},

URL = {https://doi.org/10.1063/1.525514}
    
}

@article{fine_prl,
  title = {Fine Responds},
  author = {Fine, Arthur},
  journal = {Phys. Rev. Lett.},
  volume = {49},
  issue = {3},
  pages = {243--243},
  numpages = {0},
  year = {1982},
  month = {Jul},
  publisher = {American Physical Society},
  doi = {10.1103/PhysRevLett.49.243},
  url = {https://link.aps.org/doi/10.1103/PhysRevLett.49.243}
}

@article{SahaIncom,
  title = {Measurement incompatibility and quantum advantage in communication},
  author = {Saha, Debashis and Das, Debarshi and Das, Arun Kumar and Bhattacharya, Bihalan and Majumdar, A. S.},
  journal = {Phys. Rev. A},
  volume = {107},
  issue = {6},
  pages = {062210},
  numpages = {14},
  year = {2023},
  month = {Jun},
  publisher = {American Physical Society},
  doi = {10.1103/PhysRevA.107.062210},
  url = {https://link.aps.org/doi/10.1103/PhysRevA.107.062210}
}

@article{tp_prl,
  title = {Fine-Grained Lower Limit of Entropic Uncertainty in the Presence of Quantum Memory},
  author = {Pramanik, T. and Chowdhury, P. and Majumdar, A. S.},
  journal = {Phys. Rev. Lett.},
  volume = {110},
  issue = {2},
  pages = {020402},
  numpages = {5},
  year = {2013},
  month = {Jan},
  publisher = {American Physical Society},
  doi = {10.1103/PhysRevLett.110.020402},
  url = {https://link.aps.org/doi/10.1103/PhysRevLett.110.020402}
}

@article{pla280.2265,
author = {S. Mal and A. S. Majumdar},
journal = {Phys. Lett. A},
  volume = {380},
  pages = {2265},
  year = {2016},

doi = {10.1016/j.physleta.2016.05.020},
url = {https://www.sciencedirect.com/science/article/pii/S0375960116302122}
  }

@article{sg_prl,
  title = {Quantum Contextuality Provides Communication Complexity Advantage},
  author = {Gupta, Shashank and Saha, Debashis and Xu, Zhen-Peng and Cabello, Ad\'an and Majumdar, A. S.},
  journal = {Phys. Rev. Lett.},
  volume = {130},
  issue = {8},
  pages = {080802},
  numpages = {6},
  year = {2023},
  month = {Feb},
  publisher = {American Physical Society},
  doi = {10.1103/PhysRevLett.130.080802},
  url = {https://link.aps.org/doi/10.1103/PhysRevLett.130.080802}
}

@article{prl.100.120406,
  title = {Maximal Violation of the Collins-Gisin-Linden-Massar-Popescu Inequality for Infinite Dimensional States},
  author = {Zohren, Stefan and Gill, Richard D.},
  journal = {Phys. Rev. Lett.},
  volume = {100},
  issue = {12},
  pages = {120406},
  numpages = {4},
  year = {2008},
  month = {Mar},
  publisher = {American Physical Society},
  doi = {10.1103/PhysRevLett.100.120406},
  url = {https://link.aps.org/doi/10.1103/PhysRevLett.100.120406}
}

@article{jphysa.55.384011,
doi = {10.1088/1751-8121/ac8a28},
url = {https://dx.doi.org/10.1088/1751-8121/ac8a28},
year = {2022},
month = {sep},
publisher = {IOP Publishing},
volume = {55},
number = {38},
pages = {384011},
author = {Marcin Karczewski and Giovanni Scala and Antonio Mandarino and Ana Belén Sainz and Marek Zukowski},
title = {Avenues to generalising Bell inequalities},
journal = {Journal of Physics A: Mathematical and Theoretical}
}

@article{physreva.99.022305,
  title = {Facets of bipartite nonlocality sharing by multiple observers via sequential measurements},
  author = {Das, Debarshi and Ghosal, Arkaprabha and Sasmal, Souradeep and Mal, Shiladitya and Majumdar, A. S.},
  journal = {Phys. Rev. A},
  volume = {99},
  issue = {2},
  pages = {022305},
  numpages = {10},
  year = {2019},
  month = {Feb},
  publisher = {American Physical Society},
  doi = {10.1103/PhysRevA.99.022305},
  url = {https://link.aps.org/doi/10.1103/PhysRevA.99.022305}
}

@article{Ariano_2005,
doi = {10.1088/0305-4470/38/26/010},
url = {https://dx.doi.org/10.1088/0305-4470/38/26/010},
year = {2005},
month = {jun},
publisher = {},
volume = {38},
number = {26},
pages = {5979},
author = {Giacomo Mauro D'Ariano and Paoloplacido Lo Presti and Paolo Perinotti},
title = {Classical randomness in quantum measurements},
journal = {Journal of Physics A: Mathematical and General},
}

@article{mate,
doi = {10.1088/1367-2630/ab5020},
url = {https://dx.doi.org/10.1088/1367-2630/ab5020},
year = {2019},
month = {nov},
publisher = {IOP Publishing},
volume = {21},
number = {11},
pages = {113053},
author = {Sébastien Designolle and Máté Farkas and Jędrzej Kaniewski},
title = {Incompatibility robustness of quantum measurements: a unified framework},
journal = {New Journal of Physics},
}

@article{Heinosaari_prl,
  title = {Quantum Incompatibility Witnesses},
  author = {Carmeli, Claudio and Heinosaari, Teiko and Toigo, Alessandro},
  journal = {Phys. Rev. Lett.},
  volume = {122},
  issue = {13},
  pages = {130402},
  numpages = {6},
  year = {2019},
  month = {Apr},
  publisher = {American Physical Society},
  doi = {10.1103/PhysRevLett.122.130402},
  url = {https://link.aps.org/doi/10.1103/PhysRevLett.122.130402}
}

@article{Buscemi_prl,
  title = {Complete Resource Theory of Quantum Incompatibility as Quantum Programmability},
  author = {Buscemi, Francesco and Chitambar, Eric and Zhou, Wenbin},
  journal = {Phys. Rev. Lett.},
  volume = {124},
  issue = {12},
  pages = {120401},
  numpages = {5},
  year = {2020},
  month = {Mar},
  publisher = {American Physical Society},
  doi = {10.1103/PhysRevLett.124.120401},
  url = {https://link.aps.org/doi/10.1103/PhysRevLett.124.120401}
}

@article{PhysRevA.94.062117,
  title = {Quantum mechanical violation of macrorealism for large spin and its robustness against coarse-grained measurements},
  author = {Mal, Shiladitya and Das, Debarshi and Home, Dipankar},
  journal = {Phys. Rev. A},
  volume = {94},
  issue = {6},
  pages = {062117},
  numpages = {9},
  year = {2016},
  month = {Dec},
  publisher = {American Physical Society},
  doi = {10.1103/PhysRevA.94.062117},
  url = {https://link.aps.org/doi/10.1103/PhysRevA.94.062117}
}

@article{PhysRevA.100.042114,
  title = {Persistence of quantum violation of macrorealism for large spins even under coarsening of measurement times},
  author = {Mukherjee, Sumit and Rudra, Anik and Das, Debarshi and Mal, Shiladitya and Home, Dipankar},
  journal = {Phys. Rev. A},
  volume = {100},
  issue = {4},
  pages = {042114},
  numpages = {10},
  year = {2019},
  month = {Oct},
  publisher = {American Physical Society},
  doi = {10.1103/PhysRevA.100.042114},
  url = {https://link.aps.org/doi/10.1103/PhysRevA.100.042114}
}

@article{PhysRevLett.132.030202,
  title = {Mass-Independent Scheme to Test the Quantumness of a Massive Object},
  author = {Das, Debarshi and Home, Dipankar and Ulbricht, Hendrik and Bose, Sougato},
  journal = {Phys. Rev. Lett.},
  volume = {132},
  issue = {3},
  pages = {030202},
  numpages = {7},
  year = {2024},
  month = {Jan},
  publisher = {American Physical Society},
  doi = {10.1103/PhysRevLett.132.030202},
  url = {https://link.aps.org/doi/10.1103/PhysRevLett.132.030202}
}

@article{heinosaari2015noise,
  title={Noise robustness of the incompatibility of quantum measurements},
  author={Heinosaari, Teiko and Kiukas, Jukka and Reitzner, Daniel},
  journal={Physical Review A},
  volume={92},
  number={2},
  pages={022115},
  year={2015},
  publisher={APS},
url={https://journals.aps.org/pra/abstract/10.1103/PhysRevA.92.022115}
}

@article{designolle2019quantifying,
  title={Quantifying measurement incompatibility of mutually unbiased bases},
  author={Designolle, S{\'e}bastien and Skrzypczyk, Paul and Fr{\"o}wis, Florian and Brunner, Nicolas},
  journal={Physical review letters},
  volume={122},
  number={5},
  pages={050402},
  year={2019},
  publisher={APS},
url={https://journals.aps.org/prl/abstract/10.1103/PhysRevLett.122.050402}
}
\end{document}